\newtheorem{remark}{Remark}
\newtheorem{proposition}{Proposition}
\newcommand{\Input}{\State \textbf{Input:} }
\newcommand{\Output}{\State \textbf{Output:} }
\DeclareMathOperator*{\argmax}{argmax}
\begin{document}
\setlength{\abovedisplayskip}{5pt}
\setlength{\belowdisplayskip}{5pt}

\title{Generating High Dimensional User-Specific Wireless Channels using Diffusion Models}

\author{\IEEEauthorblockN{Taekyun~Lee, Juseong~Park, Hyeji~Kim, and~Jeffrey~G.~Andrews} \\ 
\thanks{This work was partly supported by NSF Award CNS-2148141 and Keysight Technologies through the 6G@UT center within the Wireless Networking and Communications Group (WNCG) at the University of Texas at Austin, as well as ARO Award W911NF2310062 and ONR Award N000142412542.}
\thanks{Taekyun Lee, Juseong Park, Hyeji Kim, and Jeffrey G. Andrews are with the 6G@UT center in the WNCG at the University of Texas at Austin, Austin, TX 78712, USA (email: taekyun@utexas.edu, juseong.park@utexas.edu, hyeji.kim@austin.utexas.edu, jandrews@ece.utexas.edu). A preliminary version appeared in \cite{lee2024asilomar}.}
\thanks{Last revised: \today.
}}
\maketitle


\begin{abstract}

Deep neural network (DNN)-based algorithms are emerging as an important tool for many physical and MAC layer functions in future wireless communication systems, including for large multi-antenna channels.  However, training such models typically requires a large dataset of high-dimensional channel measurements, which are very difficult and expensive to obtain. This paper introduces a novel method for generating synthetic wireless channel data using diffusion-based models to produce user-specific channels that accurately reflect real-world wireless environments. Our approach employs a conditional denoising diffusion implicit model (cDDIM) framework, effectively capturing the relationship between user location and multi-antenna channel characteristics. 
We generate synthetic high fidelity channel samples using user positions as conditional inputs, creating larger augmented datasets to overcome measurement scarcity.  The utility of this method is demonstrated through its efficacy in training various downstream tasks such as channel compression and beam alignment. 
Our diffusion-based augmentation approach achieves over a 1-2 dB gain in NMSE for channel compression, and an 11 dB SNR boost in beamforming compared to prior methods, such as noise addition or the use of generative adversarial networks (GANs).

\end{abstract}

\begin{IEEEkeywords}
Deep generative models, Generative AI for wireless, Score-based models, Diffusion, MIMO, Channel compression, Beam alignment
\end{IEEEkeywords}

\section{Introduction}

\IEEEPARstart{M}ASSIVE multiple-input multiple-output (MIMO) is a foundational technology for the lower bands in 5G and will continue to evolve to larger dimensional channels in 6G, for example in the upper midband channels above 7 GHz \cite{Zhang24, FCCtac:23}.    Accurately measuring or estimating high-dimensional Channel State Information (CSI) is challenging and costly in terms of energy and bandwidth since many pilot tones are needed to determine the entire channel matrix, which is also frequency and time-varying.  Thus, even achieving accurate receiver-side CSI is nontrivial.    Recent work has shown that deep neural network (DNN)-based approaches have the potential to excel in the high-dimensional MIMO regime for tasks including detection \cite{He:20}, precoding \cite{Soh:21,park:24}, channel estimation \cite{doshi2022over, Arv:23}, and channel compression \cite{CsiNet, CsiNetPlus}.  This is in part due to their lack of reliance on a high-quality channel estimate, and to their ability to exploit learned structure in the underlying MIMO channel.  

Meanwhile, the larger antenna arrays at the higher carrier frequencies such as millimeter wave (mmWave) being considered for future systems, will create a higher dimensional channel. Although the mmWave channel rank is typically small, and the main goal is to find a high-SNR beam direction or beam pair, this task is also known to be challenging and slow, and deep learning methods show great promise in reducing the complexity and latency of beam alignment \cite{heng:23, Raj22}.
However, such deep learning methods still typically require many full-dimensional (i.e. the whole channel matrix $\mathbf{H}$) measurements, that are site-specific, to properly train the models.


In short, a common challenge of DNN-based MIMO methods---at both the lower and upper frequency bands---is their reliance on vast amounts of site-specific channel data. Specifically, training a beam alignment algorithm typically requires on the order of 100K channel measurements for each macrocell sector \cite{Raj22, heng:23, alkhateeb:19, Wang22}.   When dealing with high-dimensional channel measurements, the complexity escalates further due to the need to track a vast number of parameters at the receiver for accurate channel estimation \cite{Eli17}.  In systems like massive MIMO, this complexity is compounded by the limited number of RF chains available at the base station, leading to excessively high pilot overheads \cite{Ma20}.  Needless to say, it is very time-consuming and prohibitively expensive to collect on the order of 100K physical $\mathbf{H}$ measurements in each cell site.   Even if an offline ray tracing approach is used, which negates the need for field measurements, it is still necessary to carefully specify each cell's physical environment and accurately model its propagation characteristics, e.g. reflection coefficients of each object.  To harness the power of DNN-based approaches to MIMO systems, there is a pressing need to develop rich and realistic high-dimensional channel datasets despite having minimal actual data. Crucially, additional data must meet the necessary condition of preserving the channel's spatially correlated multipath structure to ensure reliable performance \cite{Rappaport2013mmWave5G}.


\subsection{Background \& Related works}

\smallskip \textbf{Data augmentation} To overcome the challenges posed by the limited number of physical measurements or ray traced channel samples, a promising approach is to use data augmentation, in particular using deep learning models to perform the augmentation. Non-generative models like convolutional neural networks (CNNs) or autoencoders (AEs) can capture channel statistics and generate synthetic channels \cite{Sol20, Li:21}.  However, their structure cannot accurately represent higher dimensional distributions and are limited in their ability to generate diverse datasets. 


Turning to deep generative models, a generative adversarial network (GAN) can learn complicated channel distributions and generate channel matrices \cite{Xia:22, Liang:20, Yang:19, Bal2020DosJalDimAnd}. In particular, conditional GANs have generated channels for air-to-ground communication by outputting path gain and delay separately \cite{cGAN_a2g}. A study using denoising diffusion probabilistic models (DDPM) for dataset augmentation \cite{Xu23} focuses on the tapped delay line (TDL) dataset, differing from high-dimensional mmWave channels. Similarly, \cite{Sengupta23} utilizes DDPM for dataset augmentation, but specifically for MIMO channels, in contrast to the earlier work focused on single-input single-output (SISO).   A very recent work \cite{Baur24} generates synthetic channels using a diffusion model and evaluating downstream tasks with real-world measurements in a multiple-input single-output (MISO) channel setting, showing that different generative models excel in different downstream tasks.
However, these prior works \cite{Xu23, Sengupta23, Baur24} may not be well-suited for augmenting datasets in wireless downstream tasks. Since the generated data follows the prior channel distribution, low-probability channel instances are rarely generated. Consequently, downstream task DNNs are rarely exposed to such atypical cases during training, limiting their ability to generalize effectively and potentially leading to performance degradation. 
A very recent work \cite{newZhang24} employs DDPM for channel augmentation, but our approach uniquely conditions on UE positions, which significantly enhances performance in downstream tasks.
Indeed, generating diverse and broader coverage data for augmentation is a longstanding concept in both the deep learning \cite{hendrycks2020augmix, Tifrea2022, kang2020decoupling} and statistical \cite{Chawla2002} literature. 




\smallskip \textbf{Diffusion models.}  Our work focuses on diffusion models, which are one of the most powerful and recently proposed deep generative models \cite{DDPM}.   Diffusion models have separated superimposed sources in radio-frequency systems \cite{Tejas:23} by formulating statistical priors with new posterior estimation objective functions and employing a score-matching scheme for multi-source scenarios. The score-matching model was also used for channel estimation \cite{Arv:23}, demonstrating that it can outperform conventional compressed sensing methods for 3rd Generation Partnership Project (3GPP) channel models \cite{3gpp}.

\subsection{Contributions}

We propose a novel approach to MIMO channel data generation, which is \textit{diffusion model-based CSI augmentation}. Our approach uses a small number of true $\mathbf{H}$ measurements to generate a much larger set of augmented channel samples using a diffusion model.  Our approach specifically draws inspiration from diffusion autoencoders \cite{pre:22}, considering the user's position as a conditional input and employing a diffusion model as the decoder.
We combine classifier guidance \cite{classifierguide} and the denoising diffusion implicit models (DDIM) framework \cite{DDIM}, capturing the relationship between the user's position and its MIMO channel matrix.  Our main contributions can be summarized as follows. 




\textbf{Position-based generative models for channel data prediction.} 
    This paper is, to the best of our knowledge, the first to explicitly \emph{condition on the UE position} to generate channel samples. More specifically, our method employs a conditional diffusion model \cite{classifierguide} to learn the MIMO channel distribution and predict the expected channel data at a desired user equipment (UE) location.  Furthermore, we apply a discrete Fourier transform (DFT) to obtain the beamspace representation of the channel, which we find crucial for enabling the diffusion model to accurately capture the underlying distribution. By leveraging position-based data augmentation, our approach differs from methods sampling channels under a specific distribution, producing a broader coverage set of channel samples and thereby improving the diversity of training data more suited for downstream tasks.
    
\textbf{Necessity of the Diffusion Model Framework.}
Our experiments show that when the training and test distributions are the same, supervised training using the same backbone structure can yield good results. However, when the distributions differ, only the diffusion model framework—which captures the probability distribution—works effectively.

 \textbf{Validation of the proposed dataset augmentation method in wireless communication tasks.} Our experiments demonstrate the effectiveness of augmented datasets across two important wireless communication tasks: compressing CSI feedback in massive MIMO systems using CRNet \cite{CRNet},
    and site-specific grid-free beam adaptation \cite{heng:23}. 
    The results indicate that diffusion model-based augmentation outperforms other methods with few measurements, and achieves comparable performance to true channel data.  

    This paper significantly extends our preliminary work presented at the Asilomar Conference \cite{lee2024asilomar} by expanding the scope beyond CSI compression to include additional downstream applications, providing comprehensive out-of-distribution evaluation, and detailed theoretical analysis.
    

\subsection{Notation \& Organization}
The rest of the paper is organized as follows. Preliminaries of the system model and problem setup are explained in Section \ref{prob_st}, followed by a detailed description of the proposed methods in Section \ref{Proposed Methods}. In Section \ref{Visualization and Evaluation}, we visualize the generated channels and provide a quantitative analysis. Each downstream application and the results are explained in Section \ref{Downstream Applications and Evaluation}. The paper concludes with future directions in Section \ref{Conclusion}. 

\emph{Notation:} $\mathbf{A}$ is a matrix and $\mathbf{a}$ is a vector. A continuous process is $\mathbf{A}(t)$, while discrete is $\mathbf{A}[t]$ when $t$ represents time. $\left\|\mathbf{A}\right\|_F^2$ is the Frobenius norm of $\mathbf{A}$. Concatenating the columns of  
$\mathbf{A}$ column-wise, we obtain the vector $\underline{\mathbf{A}}$. 


\section{Preliminaries and Problem Statement}
\label{prob_st}

\begin{figure*}[!tbp]
  \centering
  \includegraphics[width=1.0\linewidth]{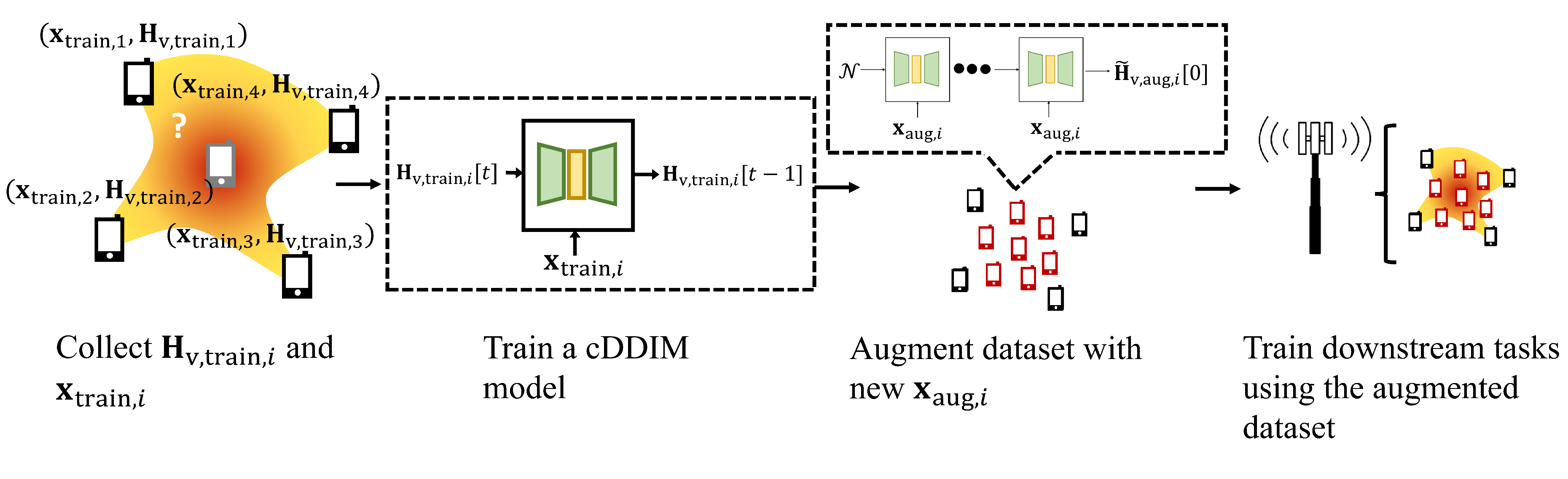}
  \caption{Illustration of the proposed wireless channel generation scenario.}\label{figure:proposedapproach.pdf}
\end{figure*}


\subsection{System Model (Channel Model)}
\label{system_model}
A narrowband massive MIMO system is considered, where a transmitter with $N_t$ antennas serves a receiver with $N_r$ antennas. This model can be extended in a straightforward way to wideband systems by incorporating frequency selectivity, such as subbands via block fading.

At 28~GHz with our $N_t=32$ ULA, any UE at $d\geq 20\ \text{m}$ is beyond the Fraunhofer distance $D_{\mathrm{FF}}\approx 5\ \text{m}$, so a planar-wave model is sufficient.
Assuming a 3D-channel model with $L$ propagation paths
\cite{say02}, the channel matrix ${\mathbf{H}} \in \mathbb{C}^{N_r\times N_t}$ can be expressed as
\begin{align}
{\mathbf{H}} = \sum\limits_{i = 1}^L {{\gamma_{i}}} {{\mathbf{a}}_r}\left( {\theta^{r} _{i}},{\phi^{r} _{i}} \right){\mathbf{a}}_t^H\left( {\theta^{t} _{i}},{\phi^{t} _{i}} \right)
\label{eqn:channel}
\end{align}
where $\gamma_{i}$ is the complex channel gain of the $i$th path, $( {\theta^{r} _{i}},{\phi^{r} _{i}})$ and $( {\theta^{t} _{i}},{\phi^{t} _{i}})$ denote the azimuth and elevation angle-of-arrival (AoA) pair and those of the angle-of-departure (AoD) pair, respectively. 
Here, ${{\mathbf{a}}_r}( \cdot)\in \mathbb{C}^{N_{r}\times 1}$ and ${\mathbf{a}}_t(\cdot)\in \mathbb{C}^{N_{t}\times 1}$ respectively account for the transmit and receive array response vectors, but we do not specify the array structure. Note that $\mathbf{H}$ can be described by the sum of channel paths where each path is a function of five parameters: its AoA pair $( {\theta^{r} _{i}},{\phi^{r} _{i}})$, AoD pair $( {\theta^{t} _{i}},{\phi^{t} _{i}})$, and channel gain $\gamma _{i}$. 
Depending on the surrounding environment, the value of $L$, along with the above five parameters, is determined to characterize the clusters and channel paths.

The variables $\gamma_{i}, \theta^{r}_{i}, \phi^{r}_{i}, \theta^{t}_{i}, \phi^{t}_{i}, i \in [1,\cdots, L]$ and $L$ can be modeled as conditional random variables given the user's 3-dimensional (3D) position $\mathbf{x} = [x_1, x_2, x_3]^T$, where $x_1$ and $x_2$ denote the user's planar position, and $x_3$ indicates their height. We assume the BS is fixed at the origin. We use the QuaDRiGa \cite{quadriga2023} and DeepMIMO \cite{alkhateeb:19} datasets in this experiment, which produce $\mathbf{H}$ based on $\mathbf{x}$ with the above parameters calculated implicitly. More details about the simulators will be presented later.

Even though we do not explicitly make any assumptions about the channel distribution $p(\mathbf{H})$, such as the dimensionality or sparsity of the channel, we will utilize the following key insight: the beamspace representation of mmWave MIMO channels exhibits high spatial correlation due to clustering. Therefore, we will focus on the beamspace representation $\mathbf{H}_\mathrm{v}$, defined as
\begin{equation}
\mathbf{H}_\mathrm{v} = {{\mathbf{A}}_r^{H}} \mathbf{H} \mathbf{A}_t
\end{equation}
where $\mathbf{A}_t \in \mathbb{C}^{N_t \times N_t}$ and $\mathbf{A}_r \in \mathbb{C}^{N_r \times N_r}$ are unitary DFT matrices. We output the beamspace matrix \(\mathbf{H}_{\mathrm v}\) because it already embeds the key path parameters—angles, delays, and gains—and all of our downstream networks are trained directly on channel matrices. Furthermore, since the channel itself changes with the transmit/receive array geometry, \(\mathbf{H}_{\mathrm v}\) implicitly carries information about the array architecture as well.

\subsection{Problem Setup}
\label{Problem Setup}

The primary objectives of this paper are twofold: (a) develop a model that estimates the beamspace channel matrix $\mathbf{H}_\mathrm{v}$ from the user's position $\mathbf{x}$ by implicitly determining the relevant parameters as explained in Section \ref{system_model}, and (b) to use this model to augment a channel measurement dataset, for the purpose of training deep learning downstream tasks.  The model we train for estimating the channel matrix $\mathbf{H}_\mathrm{v}$ from the user's position $\mathbf{x}$ utilizes conditional DDIM, which we refer to as the cDDIM model. The details of this model will be discussed in the next section. In this section, we focus on the framework for the second objective: the dataset augmentation problem.

\smallskip \textbf{Method Overview.\ } 
Suppose we have access to 
$N_\mathrm{train}$ pairs of position-channel measurements, labeled as $(\mathbf{x}_{\mathrm{train},i}, \mathbf{H}_{\mathrm{v,train},i})$, where $i \in \{1, \ldots, N_\mathrm{train}\}$. We aim to expand this dataset by randomly selecting $N_\mathrm{aug}$ positions 
$\mathbf{x}_{\mathrm{aug},i}$ for $i \in \{1, \ldots, N_\mathrm{aug}\}$, 
and generating $N_\mathrm{aug}$ estimated channels at those positions $\widetilde{\mathbf{H}}_{\mathrm{v,aug},i}$, where $i \in \{1, \ldots, N_\mathrm{aug}\}$. 
This results in an augmented dataset with $N_\mathrm{train} + N_\mathrm{aug}$ pairs of position and channel data.

\smallskip \textbf{Assumptions on Channel Measurements.\ } We treat CSI generated by statistical or ray-tracing simulators as equivalent to physical measurements. If available, one could use actual measurements in our framework without modification.   Throughout this paper, we assume noise-free samples, but if the CSI is noisy (e.g., estimated from pilots), our results will degrade in proportion to the noise variance.



\smallskip \textbf{Framework.\ } The steps of our proposed method, depicted in Fig.~\ref{figure:proposedapproach.pdf}, are: 

\begin{enumerate}
    \item Channel measurements collection: We collect channel measurements $\{\mathbf{H}_{\mathrm{v,train},i}\}_{i=1}^{N_\mathrm{train}}$ and UE positions $\{\mathbf{x}_{\mathrm{train},i}\}_{i=1}^{N_\mathrm{train}}$. 
    As mentioned before, we can collect channel measurements from pilot sequences. Additionally, we should send the UE position $\mathbf{x}_{\mathrm{train},i}$ from the UE to the base station (BS). In a non-standalone system, this can be easily sent through a lower-frequency side link since $\mathbf{x}_{\mathrm{train},i}$ is just a vector of three float numbers.
    \item Training of the cDDIM model: The cDDIM model, our generative model further explained in Section \ref{Proposed Methods}, is trained using the given measurements $\{\mathbf{H}_{\mathrm{v,train},i}\}_{i=1}^{N_\mathrm{train}}$ by adding Gaussian noise and learning the denoising procedure. UE positions $\{\mathbf{x}_{\mathrm{train},i}\}_{i=1}^{N_\mathrm{train}}$ are used as a conditional input to the model.
    \item Channel synthesis via the cDDIM model: We use the trained model to generate synthetic channel matrices $\{\widetilde{\mathbf{H}}_{\mathrm{v,aug},i}\}_{i=1}^{N_\mathrm{aug}}$ from $N_\mathrm{aug}$ UE positions $\{\mathbf{x}_{\mathrm{aug},i}\}_{i=1}^{N_\mathrm{aug}}$ that were not included in the training dataset. 
    \item Training downstream task deep learning models with both synthesized and measured channels: We use the combined set of training and augmented $N_\mathrm{train} + N_\mathrm{aug}$ channel matrices $\{\mathbf{H}_{\mathrm{v,train},i}\}_{i=1}^{N_\mathrm{train}} \bigcup \{\widetilde{\mathbf{H}}_{\mathrm{v,aug},i}\}_{i=1}^{N_\mathrm{aug}}$ for downstream tasks.
    \item Evaluation: This amplification allows us to obtain a much larger set of channel matrices for data-driven downstream tasks. 
    
\end{enumerate}
    One might consider directly comparing the augmented channels  $\{\widetilde{\mathbf{H}}_{\mathrm{v,aug},i}\}_{i=1}^{N_\mathrm{aug}}$ to ground‑truth channels $\{{\mathbf{H}}_{\mathrm{v,aug},i}\}_{i=1}^{N_\mathrm{aug}}$ using an NMSE or similar distance measure.  
    Using NMSE can however be misleading. Picosecond-level timing errors introduce effectively random phase rotations that hardly affect beam patterns or link budgets, yet can inflate NMSE to its maximum. Matching absolute phase across positions is therefore not practical \cite{Hoydis2023SionnaRT}.  Metrics insensitive to a common phase rotation—such as beam-selection accuracy or downstream-task performance provide a more meaningful assessment of spatial consistency.
    

\section{Proposed Techniques for Synthetic Channel Generation}
\label{Proposed Methods}

In this section, we explain how the diffusion model can be used to generate synthetic channels from positional data. In Section \ref{lang_diff}, we discuss the concept of the score function, the key component of diffusion models, and how to train it. In Section \ref{cDDIM}, we provide an overview of Conditional DDIM (cDDIM), a specific diffusion-based generative model, including its training process and optimization algorithm. Next, we provide a theoretical analysis of diffusion-based generative models to justify our empirical experiments in Section \ref{Theoretical analysis}.

\subsection{Capturing Channel Distribution via Denoising Score Matching}
\label{lang_diff}

Suppose we want to generate channels according to the conditional channel distribution $p(\mathbf{H}_\mathrm{v}|\mathbf{x})$ for a given position $\mathbf{x}$. However, we do not have the distribution $p(\mathbf{H}_\mathrm{v}|\mathbf{x})$ but rather have just a collection of measurements $\{(\mathbf{x}_{\mathrm{train},i}, \mathbf{H}_{\mathrm{v,train},i})\}_{i=1}^{N_\mathrm{train}}$.
In the following, we (a) discuss the forward (noise-adding) vs.\ backward (denoising) processes (b) explain the score function and how it allows us to generate channels and (c) discuss how to learn the score function solely from channel samples.

\smallskip
\textbf{Forward vs.\ Backward Processes.}
In a typical diffusion setup, one defines a forward process that gradually adds noise to a clean channel sample $\mathbf{H}_\mathrm{v}[0]$ until it becomes nearly Gaussian at $\mathbf{H}_\mathrm{v}[T]$. Conversely, a backward process progressively denoises $\mathbf{H}_\mathrm{v}[T]$ step-by-step to recover a sample from the target distribution. In our formulation, Eq.~\eqref{dtLang} represents the discrete backward recursion, where $t$ decreases from $T$ down to $0$.

\smallskip \textbf{Score Function.} To generate synthetic channels that follow
$p(\mathbf{H}_\mathrm{v}|\mathbf{x}),$
we utilize the concept of a \emph{score function}, defined as
$\nabla_{\mathbf{H}_\mathrm{v}|\mathbf{x}} \log p(\mathbf{H}_\mathrm{v}|\mathbf{x}).$
This score function indicates the direction in which the log-density increases and is employed with Langevin dynamics \cite{song:19} to generate samples. In particular, for $i = 1, \ldots, N_t$, $j = 1, \ldots, N_r$, and for $t \in \{1, \ldots, T\}$, 
the discretized backward diffusion update is given by
\begin{align}
\label{dtLang}
\mathbf{H}_\mathrm{v}[t-1] = \frac{1}{\sqrt{1-\beta[t]}} \Bigl(\mathbf{H}_\mathrm{v}[t] + \beta[t]\nabla_{\mathbf{H}_\mathrm{v}[t]|\mathbf{x}} \log p\bigl(\mathbf{H}_\mathrm{v}[t]|\mathbf{x}\bigr)\Bigr),
\end{align}
where \(\beta[t]\) is a time-dependent parameter that controls the step size and the influence of the score function.
This update represents a discretized backward diffusion process, where the recursion is set so that when \(t=T\), \(\mathbf{H}_\mathrm{v}[T]\) is pure noise, and as \(t\) decreases to 0, \(\mathbf{H}_\mathrm{v}[0]\) converges to a sample from the desired distribution \(p(\mathbf{H}_\mathrm{v}|\mathbf{x})\). In fact, as \(T \to \infty\) and \(\sigma(t) \to 0\), the distribution of \(\mathbf{H}_\mathrm{v}[0]\) converges to the true density \cite{SDEBOOK}.

\smallskip \textbf{Learning Score Function from Samples: Denoising Score Matching. \ } 
Before we derive the denoising score matching procedure, we note that predicting noise in a partially noised sample is mathematically equivalent to estimating the score function. This equivalence arises from denoising score matching techniques \cite{Pas:11}.
The key idea is to parametrize a DNN, denoted as $\mathbf{S}(\mathbf{H}_\mathrm{v}|\mathbf{x},t;\mathbf{\Theta})$, where $\mathbf{\Theta}$ represents the parameters of the network, to approximate the score function and learn $\mathbf{\Theta}$ from samples. Here, $t$ represents the inference step, but in this section, we will ignore $t$ for simplicity and focus on the core concept of score function learning.
First, explicit score matching, which directly matches the DNN $\mathbf{S}(\mathbf{H}_\mathrm{v}|\mathbf{x};\mathbf{\Theta})$ with the score function $\nabla_{\mathbf{H}_\mathrm{v}|\mathbf{x}} \log p(\mathbf{H}_\mathrm{v}|\mathbf{x})$, can be written as minimizing the following loss function to train a model,
\begin{equation}
\label{loss_exp}
\mathcal{L}_\mathrm{exp}(\mathbf{H}_{\mathrm{v}}|\mathbf{x}; \mathbf{\Theta}) = \frac{1}{2} \mathbb{E}_{\mathbf{H}_\mathrm{v}} \left\| \mathbf{S}(\mathbf{H}_\mathrm{v}|\mathbf{x}; \mathbf{\Theta}) - \nabla_{\mathbf{H}_\mathrm{v}|\mathbf{x}} \log p(\mathbf{H}_\mathrm{v}|\mathbf{x}) \right\|_F^2.
\end{equation}
We aim to learn $\mathbf{\Theta}$ that minimizes the loss function above in \eqref{loss_exp}, but it cannot be directly calculated since the score function is unknown.

\textit{Denoising score matching} does not require underlying channel distribution $p(\mathbf{H}_\mathrm{v}|\mathbf{x})$ to learn the score function from samples of channel matrices. The key idea is to create a rescaled and perturbed version of $\mathbf{H}_\mathrm{v}$ via adding random Gaussian noise to the channel, denoted by $\widetilde{\mathbf{H}}_\mathrm{v}$, for which the score of the conditional distribution $p(\widetilde{\mathbf{H}}_\mathrm{v}|{\mathbf{H}}_\mathrm{v},\mathbf{x})$ can be easily computed analytically. Concretely, we define the perturbed channel as
\begin{equation}
\widetilde{\mathbf{H}}_{\mathrm{v}} = \alpha \mathbf{H}_{\mathrm{v}} + \sigma \mathbf{N},
\end{equation}
where $\alpha$ is a scaling constant, $\mathbf{N}_{ij} \sim \mathcal{CN}(0, 1)$ $\text{ for } i = 1, \ldots, N_t, j = 1, \ldots, N_r$, and 
$\sigma$ is the noise standard deviation (with a slight abuse of notation).

Then, we can define the conditional score function as $\nabla_{\sigma {\mathbf{N}}} \log p(\widetilde{\mathbf{H}}_\mathrm{v}|\mathbf{H}_\mathrm{v},\mathbf{x}) = \nabla_{\sigma \mathbf{N}} \log p(\sigma \mathbf{N}) = - \frac{\mathbf{N}}{\sigma}$, which is available to us since we generated $\mathbf{N}$. 
%
%
%
Next, we define the denoising score matching loss function as
\begin{equation}
\label{denoising_loss}
\begin{aligned}
&\mathcal{L}_\mathrm{den}(\widetilde{\mathbf{H}}_{\mathrm{v}}|\mathbf{H}_{\mathrm{v}},\mathbf{x}; \mathbf{\Theta}) \\
&= \frac{1}{2} \mathbb{E}_{{\mathbf{N}_{ij} \sim \mathcal{CN}(0, 1)}} \left\|{\mathbf{S}}(\widetilde{\mathbf{H}}_{\mathrm{v}}|\mathbf{x}; \mathbf{\Theta}) - \nabla_{\sigma \mathbf{N}} \log p(\widetilde{\mathbf{H}}_\mathrm{v}|\mathbf{H}_\mathrm{v},\mathbf{x}) \right\|_F^2 \\
&= \frac{1}{2} \mathbb{E}_{{\mathbf{N}_{ij} \sim \mathcal{CN}(0, 1)}} \frac{1}{\sigma} \left\| \widetilde{\mathbf{S}}(\widetilde{\mathbf{H}}_{\mathrm{v}}|\mathbf{x}; \mathbf{\Theta}) - \mathbf{N} \right\|_F^2, \\
\end{aligned}
\end{equation}
where $\widetilde{\mathbf{S}}(\cdot|\cdot; \mathbf{\Theta}) \equiv -\sigma \mathbf{S}(\cdot|\cdot; \mathbf{\Theta})$.
The above loss can be calculated without knowing the distribution $p(\mathbf{H}_\mathrm{v}|\mathbf{x})$, since $\nabla_{\sigma \mathbf{N}} \log p(\widetilde{\mathbf{H}}_\mathrm{v}|\mathbf{H}_\mathrm{v},\mathbf{x}) = - \frac{\mathbf{N}}{\sigma}$, which we know. 

The following proposition shows that explicit score matching and denoising score matching are equivalent.

\begin{proposition}\label{proposition:1}
(Adopted from \cite{Pas:11}): Assuming that $\log p(\widetilde{\mathbf{H}}_{\mathrm{v}}|\mathbf{H}_{\mathrm{v}},\mathbf{x})$ is differentiable with respect to $\widetilde{\mathbf{H}}_{\mathrm{v}}$, minimizing $\mathcal{L}_\mathrm{exp}(\mathbf{H}_{\mathrm{v}}|\mathbf{x}; \mathbf{\Theta})$ is equivalent to minimizing $\mathcal{L}_\mathrm{den}(\widetilde{\mathbf{H}}_{\mathrm{v}}|\mathbf{H}_{\mathrm{v}},\mathbf{x}; \mathbf{\Theta})$.
\end{proposition}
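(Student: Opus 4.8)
The plan is to follow the classical argument of Vincent \cite{Pas:11}: expand each objective into a quadratic self-term in $\mathbf{S}$, a cross term, and a $\mathbf{\Theta}$-independent constant, then show that $\mathcal{L}_\mathrm{exp}$ and $\mathcal{L}_\mathrm{den}$ differ only by such a constant, so that their minimizers over $\mathbf{\Theta}$ coincide. Since only the self-term and the cross term depend on $\mathbf{\Theta}$, the entire proof reduces to matching these two pieces between the two objectives. A preliminary observation is that the explicit objective must here be read as matching the score of the \emph{perturbed} marginal $p(\widetilde{\mathbf{H}}_\mathrm{v}|\mathbf{x})$, because the denoising target $\nabla_{\widetilde{\mathbf{H}}_\mathrm{v}} \log p(\widetilde{\mathbf{H}}_\mathrm{v}|\mathbf{H}_\mathrm{v},\mathbf{x})$ lives on $\widetilde{\mathbf{H}}_\mathrm{v}$; with that reading both objectives integrate the same score network $\mathbf{S}(\widetilde{\mathbf{H}}_\mathrm{v}|\mathbf{x};\mathbf{\Theta})$ against the same marginal, so the self-terms $\frac{1}{2}\mathbb{E}_{p(\widetilde{\mathbf{H}}_\mathrm{v}|\mathbf{x})}\|\mathbf{S}\|^2$ are identical because the joint $p(\mathbf{H}_\mathrm{v},\widetilde{\mathbf{H}}_\mathrm{v}|\mathbf{x})$ marginalizes to $p(\widetilde{\mathbf{H}}_\mathrm{v}|\mathbf{x})$.

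The heart of the argument is the cross term. I would start from the explicit cross term $\mathbb{E}_{p(\widetilde{\mathbf{H}}_\mathrm{v}|\mathbf{x})}\langle \mathbf{S}, \nabla_{\widetilde{\mathbf{H}}_\mathrm{v}}\log p(\widetilde{\mathbf{H}}_\mathrm{v}|\mathbf{x})\rangle$, write it as an integral against $p(\widetilde{\mathbf{H}}_\mathrm{v}|\mathbf{x})$, and use $\nabla \log p = \nabla p / p$ to cancel the density, leaving $\int \langle \mathbf{S}, \nabla_{\widetilde{\mathbf{H}}_\mathrm{v}} p(\widetilde{\mathbf{H}}_\mathrm{v}|\mathbf{x})\rangle\, d\widetilde{\mathbf{H}}_\mathrm{v}$. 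Next I would substitute the marginalization $p(\widetilde{\mathbf{H}}_\mathrm{v}|\mathbf{x}) = \int p(\widetilde{\mathbf{H}}_\mathrm{v}|\mathbf{H}_\mathrm{v},\mathbf{x})\, p(\mathbf{H}_\mathrm{v}|\mathbf{x})\, d\mathbf{H}_\mathrm{v}$, pull the gradient $\nabla_{\widetilde{\mathbf{H}}_\mathrm{v}}$ inside the integral over $\mathbf{H}_\mathrm{v}$, and reintroduce a logarithm via $\nabla_{\widetilde{\mathbf{H}}_\mathrm{v}} p(\widetilde{\mathbf{H}}_\mathrm{v}|\mathbf{H}_\mathrm{v},\mathbf{x}) = p(\widetilde{\mathbf{H}}_\mathrm{v}|\mathbf{H}_\mathrm{v},\mathbf{x})\, \nabla_{\widetilde{\mathbf{H}}_\mathrm{v}} \log p(\widetilde{\mathbf{H}}_\mathrm{v}|\mathbf{H}_\mathrm{v},\mathbf{x})$. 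This recombines the two densities into the joint $p(\mathbf{H}_\mathrm{v},\widetilde{\mathbf{H}}_\mathrm{v}|\mathbf{x})$ and yields exactly the denoising cross term $\mathbb{E}_{p(\mathbf{H}_\mathrm{v},\widetilde{\mathbf{H}}_\mathrm{v}|\mathbf{x})}\langle \mathbf{S}, \nabla_{\widetilde{\mathbf{H}}_\mathrm{v}}\log p(\widetilde{\mathbf{H}}_\mathrm{v}|\mathbf{H}_\mathrm{v},\mathbf{x})\rangle$. With the self-terms and cross-terms matched, the two losses differ by a $\mathbf{\Theta}$-independent constant and share the same minimizer.

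I expect the main obstacle to be the rigorous justification of interchanging differentiation and integration when pulling $\nabla_{\widetilde{\mathbf{H}}_\mathrm{v}}$ inside the integral over $\mathbf{H}_\mathrm{v}$. This is precisely where the stated differentiability hypothesis on $\log p(\widetilde{\mathbf{H}}_{\mathrm{v}}|\mathbf{H}_{\mathrm{v}},\mathbf{x})$ is invoked, together with an implicit integrability and decay condition ensuring a dominating integrable bound (so dominated convergence applies) and no boundary contributions. A secondary subtlety is the bookkeeping for the complex matrix-valued channel: I would vectorize via $\underline{\mathbf{H}}_\mathrm{v}$ and split into real and imaginary parts, so that the gradient, the inner product, and the Gaussian perturbation density are all unambiguously defined, after which the scalar argument applies componentwise. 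Finally, because $p(\widetilde{\mathbf{H}}_\mathrm{v}|\mathbf{H}_\mathrm{v},\mathbf{x})$ is Gaussian by construction, the denoising target is explicit, $\nabla_{\widetilde{\mathbf{H}}_\mathrm{v}}\log p(\widetilde{\mathbf{H}}_\mathrm{v}|\mathbf{H}_\mathrm{v},\mathbf{x}) = -\mathbf{N}/\sigma$, recovering the reweighted residual form already used in \eqref{denoising_loss} and confirming the objectives coincide in their $\mathbf{\Theta}$-dependence.
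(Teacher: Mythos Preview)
Your proposal is correct and follows precisely the approach the paper defers to: the paper's own proof is simply ``Follows Appendix in \cite{Pas:11},'' and the argument you outline---expanding both objectives into a self-term, a cross term, and a $\mathbf{\Theta}$-independent constant, then matching the cross terms via $\nabla\log p = \nabla p/p$, marginalization, and the interchange of gradient and integral---is exactly Vincent's proof. Your observation that $\mathcal{L}_\mathrm{exp}$ must be read as matching the score of the perturbed marginal $p(\widetilde{\mathbf{H}}_\mathrm{v}|\mathbf{x})$ is also correct and in fact clarifies a notational looseness in the paper's equation \eqref{loss_exp}.
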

\begin{proof}
Follows Appendix in \cite{Pas:11}.
\end{proof}

Proposition \ref{proposition:1} means we can perform score matching without knowing the underlying distribution $p(\mathbf{H}_\mathrm{v}|\mathbf{x})$. By training the neural network  $\widetilde{\mathbf{S}}(\widetilde{\mathbf{H}}_\mathrm{v}|\mathbf{x};\mathbf{\Theta})$ to converge to a known $\mathbf{N}$ in a supervised fashion, we can effectively learn the score function $\nabla_{\mathbf{H}_\mathrm{v}|\mathbf{x}} \log p(\mathbf{H}_\mathrm{v}|\mathbf{x})$. 

Therefore, by leveraging denoising score matching, without making any assumptions about the underlying distribution $p(\mathbf{H}_\mathrm{v}|\mathbf{x})$ and using only $\{(\mathbf{x}_{\mathrm{train},i}, \mathbf{H}_{\mathrm{v,train},i})\}_{i=1}^{N_\mathrm{train}}$, we can train the score function and sample the channel from the trained model. The exact training and sampling algorithms are described in the next section.

\subsection{Conditional DDIM (cDDIM)}
\label{cDDIM}

To train the score function $\widetilde{\mathbf{S}}(\cdot|\cdot,\cdot; \mathbf{\Theta})$, we optimize the parameters $\mathbf{\Theta}$ as explained briefly in Section \ref{lang_diff}. 
This section focuses on explaining our method in detail, including the structure of our cDDIM model and its training and inference processes.

\label{U-net_model}
\begin{figure*}[!tbp]
  \centering
  \includegraphics[width=0.75\linewidth]{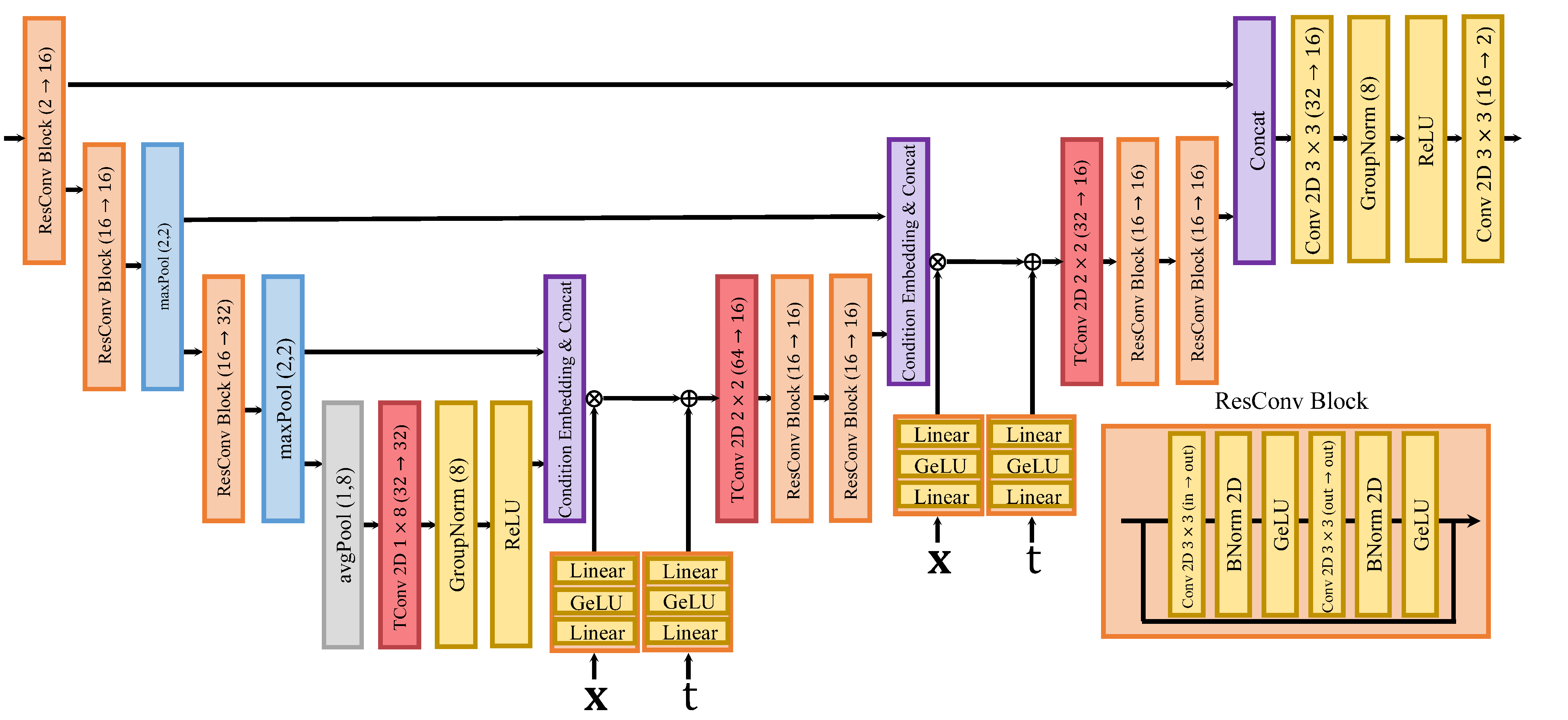}
  \caption{The cDDIM model architecture.}\label{figure:appendix.pdf}
\end{figure*}

\begin{figure}[!tbp]
  \centering
  \includegraphics[width=1.0\linewidth]{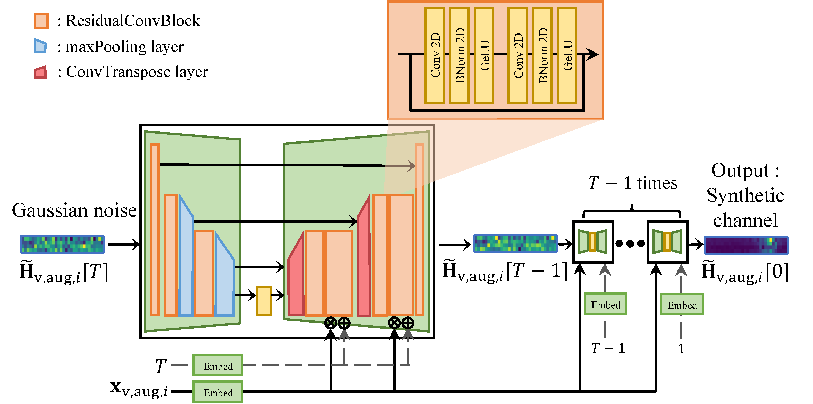}
  \caption{Inference process: Iteratively adding and denoising Gaussian noise over $T$ iterations to generate a synthetic channel from the conditional UE position input $\mathbf{x}_{\mathrm{v,aug,i}}$.}\label{figure:DDIM.pdf}
\end{figure}

\smallskip \textbf{Architecture of Our Method: The cDDIM Model.}
We implement our cDDIM model using a U‑net structure \cite{unet}. In addition, our U‑net architecture is relatively compact compared to modern computer vision models because it avoids self‑ and cross‑attention layers. This design reduces the overall parameter count and prevents overfitting on small datasets, while still capturing the essential features for channel generation.
The model takes the conditional input, UE position $\mathbf{x}$, and the inference step $t$, iterating T times from $t = T$ to $t = 1$. The conditional input $\mathbf{x}$ is embedded and elementwise multiplied with the concatenated vector, and the time step $t$ is elementwise added after embedding. The model structure is shown in Fig. \ref{figure:appendix.pdf}, and the entire process is illustrated in Fig. \ref{figure:DDIM.pdf}.

\smallskip \textbf{Training Process of the cDDIM Model.}
Diffusion-based generative models operate by learning a denoising process across various noise levels. The training process for a conditional DDIM (cDDIM) is described in Algorithm \ref{alg:train_classifier_free}.

\begin{algorithm}
\caption{Training cDDIM}
\label{alg:train_classifier_free}
\begin{algorithmic}[1]
\Require{Precomputed schedules $\{\overline{\alpha}[t]\}_{t=1}^{T}$, a model $\widetilde{\mathbf{S}}(\cdot \mid \cdot,\cdot;\cdot)$}
\Input{Channel matrices $\{\mathbf{H}_{\mathrm{v,train},i}\}_{i=1}^{N_\mathrm{train}}$ , corresponding UE positions $\{\mathbf{x}_{\mathrm{train},i}\}_{i=1}^{N_\mathrm{train}}$, initial model parameters $\mathbf{\Theta}$}
\Repeat
    \For{$i = 1$ to $N_{\mathrm{train}}$}
        \State $\mathbf{H}_{\mathrm{v,train},i}[0] \leftarrow \mathbf{H}_{\mathrm{v,train},i}$ 
        \State $t \sim \text{Uniform}(1, T)$ 
        \State $\mathbf{N}_{ab}[t] \sim \text{i.i.d.}, \mathcal{CN}(0, 1) \text{ for } \forall a, b$
        \State $\mathbf{H}_{\mathrm{v,train},i}[t] \leftarrow \sqrt{\overline{\alpha}[t]} \mathbf{H}_{\mathrm{v,train},i}[0] + \sqrt{1 - \overline{\alpha}[t]} \mathbf{N}[t]$ 
        \State $\mathbf{\Theta} \leftarrow \mathbf{\Theta} - \eta \nabla_{\mathbf{\Theta}} \| \widetilde{\mathbf{S}}(\mathbf{H}_{\mathrm{v,train},i}[t]|\mathbf{x}_{\mathrm{train},i},t;\mathbf{\Theta}) - \mathbf{N}[t] \|_F^2$
    \EndFor
\Until{converged}
\Output{Trained model $\widetilde{\mathbf{S}}(\cdot|\cdot,\cdot;\mathbf{\Theta})$}
\end{algorithmic}
\end{algorithm}

As shown in line 1 of Algorithm \ref{alg:train_classifier_free}, our input to the channel consists of channel matrices and corresponding UE position pairs. 
In this context, $\overline{\alpha}[t]$ represents the cumulative product of a predefined scaling schedule $\alpha$ over time steps, defined as 
\[
\overline{\alpha}[t] = \prod_{u=1}^t \alpha[u] \quad \text{with} \quad \alpha[t] = 1 - \beta[t],
\] where \(\beta[t]\) is described in Section~\ref{lang_diff} \cite{DDPM}.
The variable $\sigma[t]$, defined as $\sigma[t] = \sqrt{1 - \overline{\alpha}[t]}$, governs the noise level at each step and is used in the denoising process.
Although it is a sequential discrete process, we can represent $\mathbf{H}_{\mathrm{v,train},i}[t]$ in terms of $\mathbf{H}_{\mathrm{v,train},i}[0]$ and noise $\mathbf{N}[t]$, which is 
\begin{equation}
\label{dt_lang}
\begin{split}
&\mathbf{H}_{\mathrm{v,train},i}[t] = \sqrt{\overline{\alpha}[t]} \mathbf{H}_{\mathrm{v,train},i}[0] + \sqrt{1 - \overline{\alpha}[t]} \mathbf{N}[t].
\end{split}
\end{equation}

Then, we update $\mathbf{\Theta}$ by calculating the gradient of the difference between $\widetilde{\mathbf{S}}$ with the input of the noise-added channel $\mathbf{H}_{\mathrm{v,train},i}[t]$ and the noise $\mathbf{N}[t]$, as shown in line 8. The output of Algorithm \ref{alg:train_classifier_free} is the trained model $\widetilde{\mathbf{S}}(\cdot|\cdot,\cdot;\mathbf{\Theta})$, which is used for the inference process, as explained next.

\smallskip \textbf{Inference Process of the cDDIM Model.} Our goal is to generate $\mathbf{H}_\mathrm{v} \sim p(\mathbf{H}_\mathrm{v}|\mathbf{x})$ for a given UE position $\mathbf{x}$ as input. 
The sampling process is described in detail in Algorithm \ref{alg:sample_classifier_free}.

\begin{algorithm}
\caption{Sampling from a trained cDDIM}
\label{alg:sample_classifier_free}
\begin{algorithmic}[1]
\Require{Precomputed schedules $\{\overline{\alpha}[t]\}_{t=1}^{T}$, pretrained model $\widetilde{\mathbf{S}}(\cdot|\cdot,\cdot;\mathbf{\Theta})$}
\Input{UE positions $\{\mathbf{x}_{\mathrm{aug},i}\}_{i=1}^{N_\mathrm{aug}}$}
\State $\widetilde{\mathbf{H}}_{\mathrm{v,aug},i,ab}[T] \sim \mathcal{CN}(0, 1) \text{ for } \forall a, b$
\For{$i = 1$ to $N_{\mathrm{aug}}$}
    \For{$t = T, \ldots, 1$}
        \State \small $\widetilde{\mathbf{H}}_{\mathrm{v,aug},i}[t-1]\!\leftarrow\!\sqrt{1-\overline{\alpha}[t\!-\!1]}\,\widetilde{\mathbf{S}}\bigl(\widetilde{\mathbf{H}}_{\mathrm{v,aug},i}[t]\!\!\mid\!\mathbf{x}_{\mathrm{aug},i},t;\!\mathbf{\Theta}\bigr)$
        \Statex \hspace{0.2em} $+\sqrt{\overline{\alpha}[t-1]} \left(\frac{\widetilde{\mathbf{H}}_{\mathrm{v,aug},i}[t]-\sqrt{1-\overline{\alpha}[t]} \ \widetilde{\mathbf{S}}(\widetilde{\mathbf{H}}_{\mathrm{v,aug},i}[t]|\mathbf{x}_{\mathrm{aug},i},t;\mathbf{\Theta})}
        {\sqrt{\overline{\alpha}[t]}}\right)$
    \EndFor
\EndFor
\Output 
$\{\widetilde{\mathbf{H}}_{\mathrm{v,aug},i}[0]\}_{i=1}^{N_\mathrm{aug}}$
\end{algorithmic}
\end{algorithm}

When sampling from the model, we need to perform a backward process.
The backward process transforms arbitrary Gaussian noise into a clean image through a sequence of $T$ denoising steps.

Since we are trying to train the deterministic function between $\mathbf{x}$ and $\mathbf{H}_\mathrm{v}[0]$, we use DDIM \cite{DDIM}, which follows a deterministic generation process.
While we focus on DDIM due to its deterministic sampling,
we note that the DDPM-based augmentation yields comparable
results for channel distribution modeling. In
practice, choosing between DDPM and DDIM has negligible impact on
overall performance, but we opt for DDIM's deterministic property
because it yields more consistent, position-specific channel realizations
without introducing additional random variation.
While the previous equation \eqref{dtLang} utilized the score function, the current approach approximates this process using $\widetilde{\mathbf{S}}$ in the DDIM framework. The DDIM sampling equation is
\begin{equation}
\label{cDDIM:inf}
\begin{split} 
&\widetilde{\mathbf{H}}_{\mathrm{v,aug},i}[t-1] \!=\! \sqrt{1-\overline{\alpha}[t-1]} \ \widetilde{\mathbf{S}}(\widetilde{\mathbf{H}}_{\mathrm{v,aug},i}[t]|\mathbf{x}_{\rm{aug},i},t;\mathbf{\Theta}) \\
&+\!\sqrt{\overline{\alpha}[t\!-\!1]} \!\left(\!\frac{\widetilde{\mathbf{H}}_{\mathrm{v,aug},i}[t]\!-\!\sqrt{1\!-\!\overline{\alpha}[t]} \ \widetilde{\mathbf{S}}(\widetilde{\mathbf{H}}_{\mathrm{v,aug},i}\![t]\!|\mathbf{x}_{\rm{aug},i},t;\mathbf{\Theta})}
{\sqrt{\overline{\alpha}[t]}}\!\right)\!.
\end{split}
\end{equation}
Samples are generated from latent variables using a fixed procedure, without any stochastic noise involved in \eqref{cDDIM:inf}. Consequently, the model functions as an implicit probabilistic model. This process is repeated for all $N_\text{aug}$ UE positions, as shown in lines 3 to 8 of Algorithm \ref{alg:sample_classifier_free}. The final output of the model is an augmented dataset $\{\widetilde{\mathbf{H}}_{\mathrm{v,aug},i}[0]\}_{i=1}^{N_\mathrm{aug}}$, which we use for downstream wireless communication tasks.




\subsection{Theoretical Analysis}
\label{Theoretical analysis}

In this section, we provide theoretical analysis to answer the following question: Can a diffusion model trained with only $N_\mathrm{train}$ samples reliably learn the score function, with theoretical guarantees? (We exclude the conditioning of $\mathbf{H}_\mathrm{v}$ on $\mathbf{x}$ and $t$ in this section for notational convenience.)

Diffusion models are a recent development, and their analysis is well-understood only in certain special cases, such as Gaussian data \cite{Bruno2023}, which can also be thought of as Rayleigh fading channels. Nevertheless, based on recent findings, we provide theoretical guarantees on the convergence of these models in terms of the latent dimension of the MIMO channels, which we define formally in Remark \ref{remark:2}. The key insight is as follows: 
leveraging the fact that sparse MIMO channels have low rank $r$, we demonstrate that the crucial factor for the convergence of the diffusion model is not the dimension of the channel itself, $N_t \times N_r$, but the dimension of the underlying latent vector, $d \leq r$. 

Before we present Remark 1, we note that the remark is established for the continuous Langevin process $\mathbf{H}_\mathrm{v}(t)$ which is related to the discrete Langevin process $\mathbf{H}_\mathrm{v}[t]$. We will not detail this relationship in this context \cite{minshuo23}. \(\mathcal{O}(\cdot)\) describes the growth rate of a function as the input size increases. For example, \(\mathcal{O}(n^2)\) means the function grows quadratically. Tilde notation \(\widetilde{\mathcal{O}}(n)\) is similar but includes slower-growing factors like logarithms.

\begin{remark}
\label{remark:2}
Consider the continuous Langevin process $\mathbf{H}_{\mathrm{v}}(t)$, where $\mathbf{H}_{\mathrm{v}}(0)$ represents the pure channel matrix and $\mathbf{H}_{\mathrm{v}}(T)$ represents Gaussian noise. If the channel distribution can be expressed as \(\underline{\mathbf{H}_{\mathrm{v}}}(0) = \mathbf{A} \mathbf{z}(\mathbf{x})\), where \(\underline{\mathbf{H}_{\mathrm{v}}}(0) \in \mathbb{C}^{N_t N_r}\), \(\mathbf{A} \in \mathbb{C}^{N_t N_r \times d}\) with orthonormal columns, and \(\mathbf{z}(\mathbf{x}) \in \mathbb{C}^{d}\) is a low-dimensional function vector of $\mathbf{x}$, then from Theorem 1 in \cite{minshuo23}, the difference in the score function, $\mathbb{E}_{0 \leq t \leq T}\left\|{\mathbf{S}}\left(\mathbf{H}_{\mathrm{v}}(t); \mathbf{\Theta}\right) - \nabla_{\mathbf{H}_\mathrm{v}(t)} \log p(\mathbf{H}_\mathrm{v}(t))\right\|$, can be bounded as $\widetilde{\mathcal{O}}\left(N_\text{train}^{-\frac{2}{d+6}}\right).$ 
\end{remark}

Remark \ref{remark:2} tells us that if we can assume the channel matrix distribution can be projected onto a low-rank space, we can establish an asymptotic error bound for the diffusion model when using a finite amount of data. Given that our dataset consists of sparse MIMO channels in beamspace, this assumption is highly plausible. Thus, even for high-dimensional channel data, the low-rank and sparse nature of mmWave channels allows the diffusion model to learn the score function with only a finite amount of data, as it is the dimension of the underlying latent vector that is crucial.

However, there are limitations to the above analysis. First, we generally assumed that \(\mathbf{H}_{\mathrm{v}}\) can be expressed in the form \(\mathbf{A}\mathbf{z}(\mathbf{x})\), which is not the case for channels derived from site-specific simulators. If we force this form, the dimension \(d\) of \(\mathbf{z}\) would become very large. Second, for example, even if \(d = 6\), which is reasonably small, the bound on the score function function \(\mathbf{S}\) is still \(\widetilde{\mathcal{O}}(N_\text{train}^{-\frac{1}{6}})\). 

This means that while increasing $N_\text{train}$ does reduce the error, to reduce the error by 10 times, we would theoretically need $10^6$ times more data, which is quite impractical. Nevertheless, in practice, we observe that the diffusion model converges without requiring such an enormous amount of data, suggesting that the theoretical bounds may be conservative and that the model’s practical performance improves with a moderate $N_\text{train}$ size.

\subsection{Complexity Analysis}

Table~\ref{tab:time_comparison} compares both the parameter count and time complexity of our cDDIM-based approach with three baselines: adding Gaussian noise, ChannelGAN~\cite{Xia:22}, and the one-shot U-net. Let $E$ be the number of training epochs, $N_{\mathrm{train}}$ the training set size, $T$ the number of diffusion steps, and $U$ the cost of a single forward--backward pass in the U-net. For GAN-based augmentation, $G_{\mathrm{fw}}$/$G_{\mathrm{bw}}$ and $D_{\mathrm{fw}}$/$D_{\mathrm{bw}}$ denote the forward/backward pass costs for the generator and discriminator, respectively.

While the one‑shot U‑net shares the same training time as the diffusion model, it requires only a single step during inference, making it $T$ times faster.
Although adding Gaussian noise, ChannelGAN, and the one‑shot U‑net are generally faster to train or deploy, they yield significantly weaker performance—especially on downstream or out-of-distribution tasks, as we will show later. By contrast, cDDIM offers markedly higher realism and task accuracy at an additional cost of $\mathcal{O}(T \times U)$ per sample during inference. 

In terms of parameters, cDDIM has about 6 times more than ChannelGAN (i.e., $\sim\!62\text{k}$ vs.\ $\sim\!10\text{k}$), but increasing ChannelGAN’s dimensionality to match cDDIM’s parameter count does not improve ChannelGAN’s performance. One-shot U-net shares cDDIM’s backbone (both ~62k), and though they share the same training complexity, one-shot U-net requires only one inference step, making it $T$ times faster than cDDIM. While adding Gaussian noise, ChannelGAN, and one-shot U-net are faster, they yield weaker performance—especially out-of-distribution. By contrast, cDDIM offers markedly higher realism and task accuracy at an extra inference cost of $\mathcal{O}(T \times U)$ per sample. 
Since channel generation is typically performed \emph{offline}, this additional overhead is acceptable compared to the substantial performance benefits.


\begin{table}[h]
\centering
\scriptsize
\setlength{\tabcolsep}{4pt}
\caption{Time complexity and parameter size comparison.}
\label{tab:time_comparison}
\resizebox{\columnwidth}{!}{
\begin{tabular}{lccc}
\toprule
\textbf{Method} & \textbf{\#Parameters} & \textbf{Training Complexity} & \textbf{Inference Complexity} \\
\midrule
\textbf{Add Gaussian noise} 
& 0  
& $\mathcal{O}(N_\mathrm{train})$ 
& $\mathcal{O}(N_\mathrm{aug})$ \\
\textbf{ChannelGAN} 
& 10k 
& $\mathcal{O}\bigl(E \times N_\mathrm{train} \times (G_{\mathrm{fw}} + G_{\mathrm{bw}} + D_{\mathrm{fw}} + D_{\mathrm{bw}})\bigr)$ 
& $\mathcal{O}(N_\mathrm{aug} \times G_{\mathrm{fw}})$ \\
\textbf{One-shot U-net} 
& 62k 
& $\mathcal{O}(E \times N_\mathrm{train} \times U)$ 
& $\mathcal{O}(N_\mathrm{aug} \times U)$ \\
\textbf{Diffusion (cDDIM)} 
& 62k 
& $\mathcal{O}(E \times N_\mathrm{train} \times U)$ 
& $\mathcal{O}(T \times N_\mathrm{aug} \times U)$ \\
\bottomrule
\end{tabular}
}
\end{table}

\section{Visualization and Evaluation}
\label{Visualization and Evaluation}
In this section, we evaluate the proposed dataset augmentation method through both qualitative visualization and quantitative analysis. We conduct two experiments:

\textbf{Experiment In‑Distribution (Exp ID):} Both the training and test datasets are collected from a 100\,m radius centered at the base station (BS).

\textbf{Experiment Out‑of‑Distribution (Exp OOD):} The training set is collected from a 100\,m radius, while the test set is collected from a donut-shaped region spanning 100\,m to 200\,m, to evaluate generalization under mismatched train/test distributions.

\begin{figure}[!tbp]
    \centering
    \includegraphics[width=0.6\columnwidth]{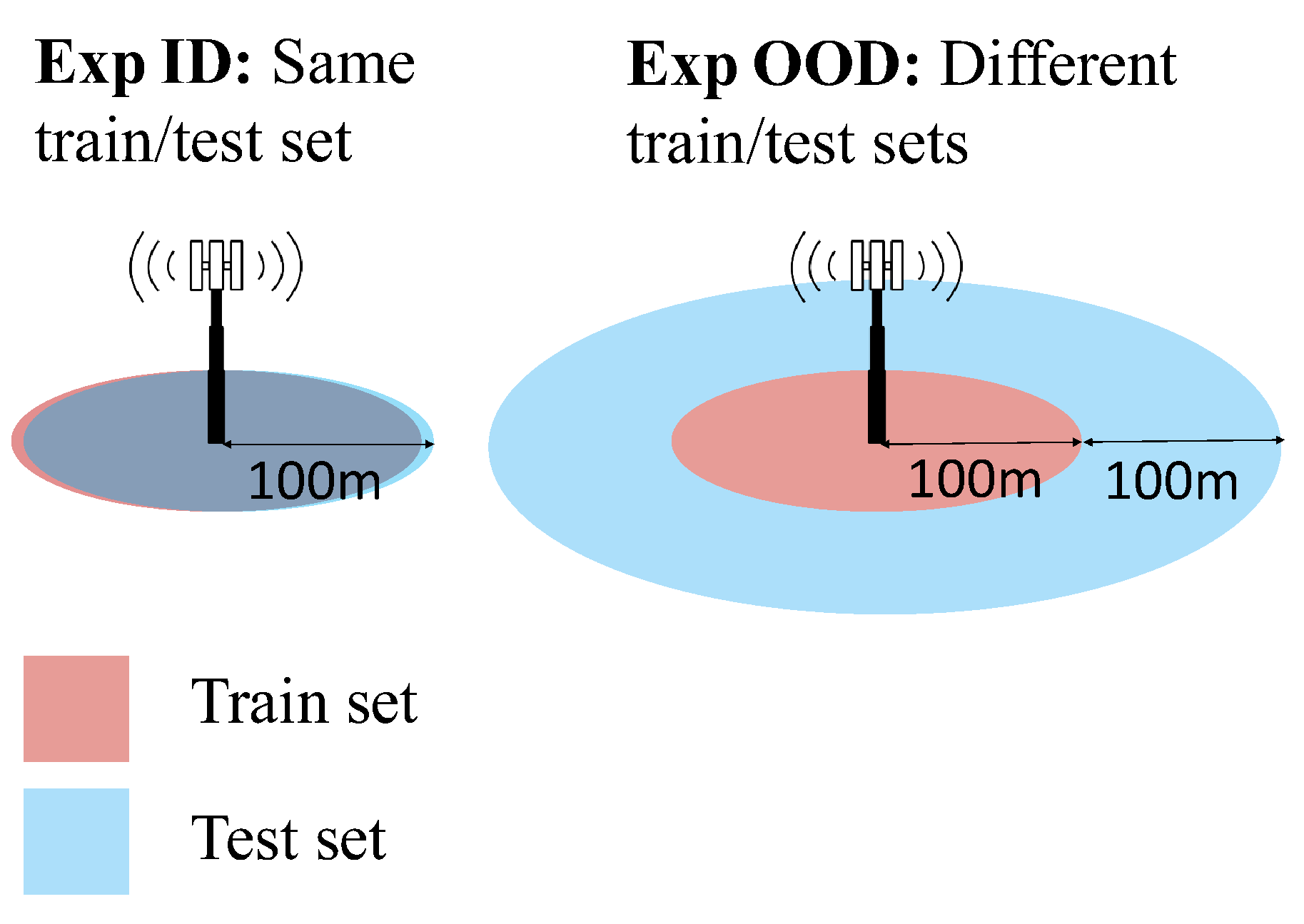}
    \caption{Illustration of the two experiment setups. 
    In Exp ID, both the training and test datasets are 
    within a 100\,m radius. In Exp OOD, the test dataset is 
    collected from a donut-shaped region spanning 100\,m to 200\,m 
    from the base station.}
    \label{fig:twoexp}
\end{figure}

Fig.~\ref{fig:twoexp} visualizes the two experiment settings. Section \ref{Simulation Setup} describes the simulation environment. In Section \ref{Visualizations}, the generated channel is visualized, demonstrating that our method produces accurate estimates when UE positions are given as conditional input.
Finally, Section~\ref{Beam selection} quantitatively analyzes the index of the maximum beam and shows that our cDDIM method provides a good approximation of the ground truth channel—even when there are no nearby users in the training dataset.

\subsection{Simulation Setup}
\label{Simulation Setup}

\begin{table}[ht]
\small
\centering
\caption{Simulation parameters of the proposed approach}\label{table:quadriga}
\begin{tabular}{| c | c |}
\hline
Simulation Environment & QuaDRiGa \\ \hline
Scenario Name & Berlin urban macro LOS \\ \hline
UE Pre-Augmentation $N_\text{train}$ & 100 -- 10,000 \\ \hline
UE Augmented Samples $N_\text{aug}$ & 90,000 \\ \hline
UE Inference Samples $N_\text{test}$ & 10,000 \\ \hline
BS Antenna $N_t$ & 32, $32 \times 1$ ULA \\ \hline
UE Antenna $N_r$ & 4, $4 \times 1$ ULA \\ \hline
DDIM Train Epochs & 50,000 \\ \hline
DDIM Sampling Steps & 256 \\ \hline
Carrier Frequency & 28 GHz \\ \hline
Bandwidth ($B$) & 20 MHz \\ \hline
UE Range & \begin{tabular}[c]{@{}l@{}} Train: 100\,m radius \\ Test\,(Exp~1): 0--100\,m \\ Test\,(Exp~2): 100--200\,m 
\end{tabular} \\ \hline
\end{tabular}
\end{table}

\smallskip \textbf{Channel Matrix Generation and Dataset Description.} 
We follow these steps in our simulation setup:

\begin{enumerate}
    \item We randomly locate $N_{\text{train}}$ training samples and determine user positions for $N_{\text{aug}}$ augmented samples in the Berlin urban macro LOS scenario using the QuaDRiGa simulator \cite{quadriga2023}.
    \item We generate an initial set of channels for the $N_{\text{train}}$ training samples and then generate $N_{\text{aug}}$ augmented channels by conditioning on the corresponding user locations.
    \item For evaluation, we compare the $N_{\text{aug}}$ augmented samples with an equally sized set of reference samples. Specifically, we visualize the generated channels in Fig.~\ref{figure:comparison_visualization.pdf} and compare peak index match probabilities in Fig.~\ref{figure:probabilitycomparison.pdf}.
\end{enumerate}

Although $N_{\text{train}}$ and $N_{\text{aug}}$ vary for the downstream tasks, in these experiments we set $N_{\text{train}} = 100$ and $N_{\text{aug}} = 10,000$.
The underlying channel generation follows the QuaDRiGa simulator, which produces realistic radio channel impulse responses for system-level simulations of mobile radio networks. Each DDIM inference takes 256 steps.

We apply min–max normalization to each subcarrier, dividing by its largest amplitude so that values range from 0 to 1 and ensuring uniform channel scaling for all experiments and downstream tasks.
Table \ref{table:quadriga} details the parameters and settings.

\begin{figure*}[!tbp]
  \centering
  \begin{subfigure}{0.49\linewidth}
    \centering
    \includegraphics[width=\linewidth, trim={50pt 10pt 100pt 30pt},clip]{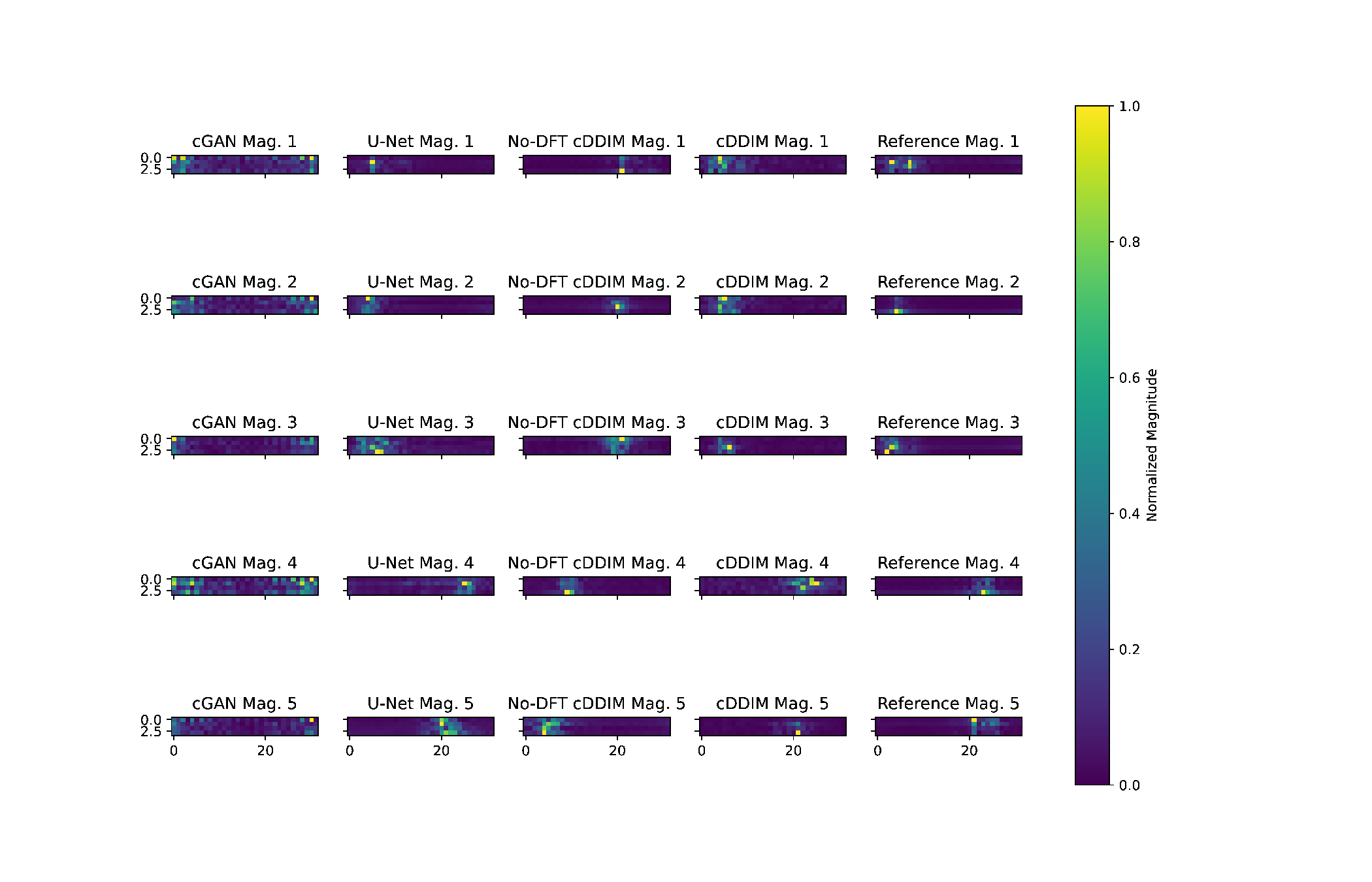}
    \caption{\textbf{Exp ID}: The training and test datasets are drawn from the same distribution.}
    \label{fig:exp1}
  \end{subfigure}
  \begin{subfigure}{0.49\linewidth}
    \centering
    \includegraphics[width=\linewidth, trim={50pt 10pt 100pt 30pt},clip]{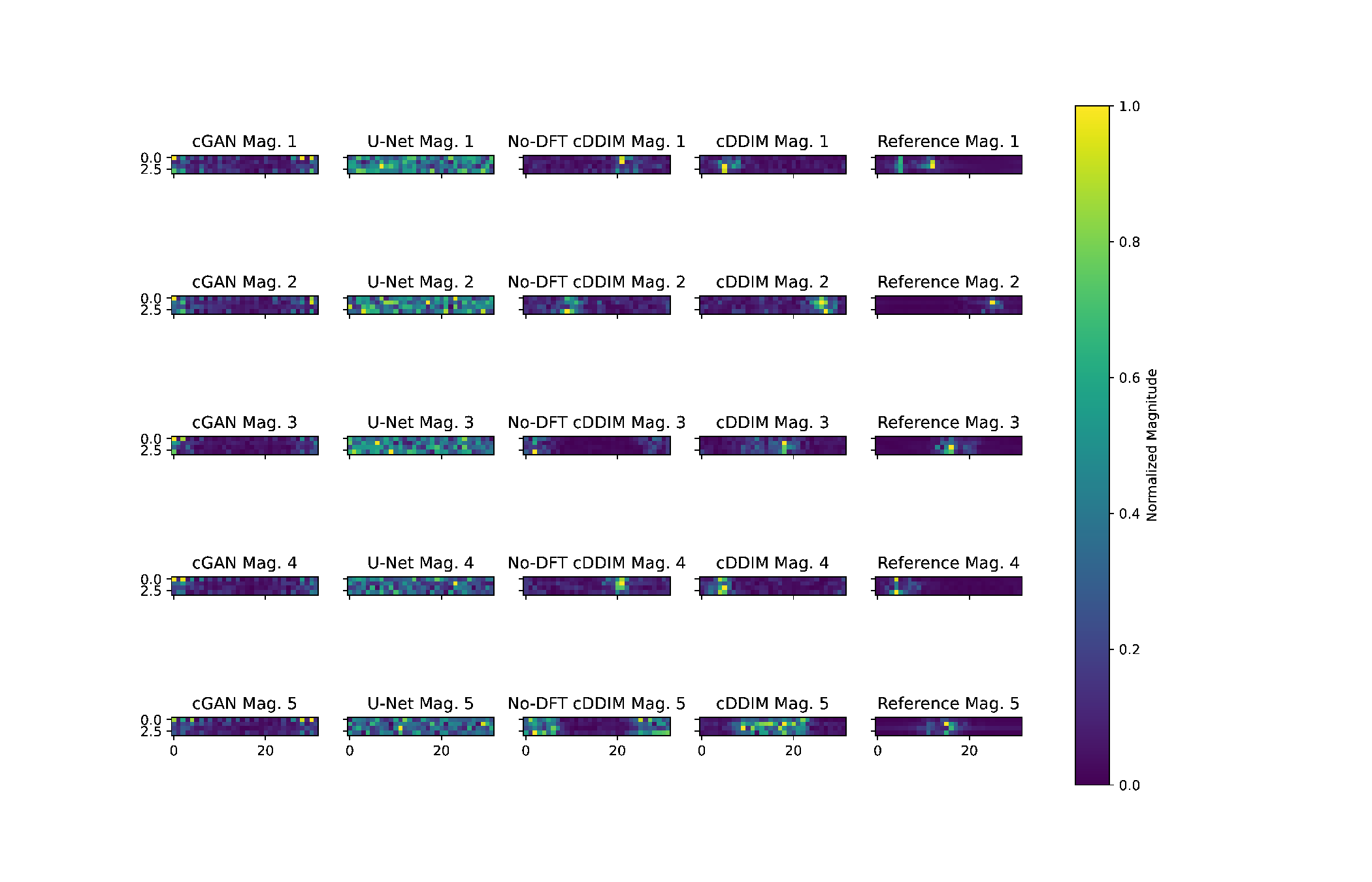}
    \caption{\textbf{Exp OOD}: The training and test datasets are drawn from different distributions.}
    \label{fig:exp2}
  \end{subfigure}
  \caption{Visualization of the magnitude of five randomly selected synthetic/reference channel examples. From left to right, each column shows channel samples generated by cGAN, one-shot U-net, cDDIM trained in the spatial domain, cDDIM, and the reference channel.}
  \label{figure:comparison_visualization.pdf}
\end{figure*}

\subsection{Visualizations of the Generated Channel}
\label{Visualizations}

\smallskip {\bf Observations from Generated Channel.} In our scenario, with a LOS path and using ULA antennas, the beamspace domain of the channel typically shows one main cluster with a significantly higher magnitude value than any other point. We define this as the peak, specifically examining the peak BS side index. In Fig. \ref{figure:comparison_visualization.pdf}, we compare five random channel samples generated by cGAN, cDDIM, one-shot U-net, and the reference channels at the same position. The visualization highlights how each method predicts the peak BS index in the LOS path.

Random UE antenna orientations \cite{3gpp} make the UE side order unpredictable. Therefore, we can predict the BS index but not the UE index. What we want to capture from the channel matrix is the index of the largest peak. 

\textbf{Baselines.} We compare with several baseline methods, either existing or newly developed by us.

\emph{No-DFT cDDIM:}
To emphasize the effect of beamspace transforms, we also consider a cDDIM variant trained directly in the \emph{spatial} domain without applying DFT. That is, this baseline attempts to learn the distribution of $\mathbf{H}$ itself, rather than its beamspace representation $\mathbf{H}_{\mathrm{v}}$. As shown below, this approach fails to replicate the distinctive LoS peak and multipath structure, which underscores why beamspace domain input is crucial.
 
\emph{cGAN (Conditional GAN):} For comparison, we use the cDDIM method and the conditional GAN (cGAN) method, which is a variant derived from ChannelGAN \cite{Xia:22}. ChannelGAN does not include positional data, so we implemented similar conditioning with our cDDIM method to ChannelGAN and named it cGAN. Our goal is to determine if it can learn the function that maps position to channel matrix using cGAN.

\emph{One‑shot U-net:}
Rather than iteratively denoising noisy channel samples (like cDDIM), this model uses a single forward pass of the same U-net backbone to map a random noise sample (plus the UE position) directly to a channel. Algorithm~\ref{alg:train_consistency_model} outlines the training procedure, where we reuse the cDDIM structure to train the U-net to output a clean channel in one step. Then, at inference, as shown in Algorithm~\ref{alg:sample_ct}, we simply pass pure noise through the model to generate a channel. Although this approach provides much lower inference latency, it may yield less accurate or less diverse samples than multi-step diffusion.

\begin{algorithm}
\caption{Training one-shot U-net}
\label{alg:train_consistency_model}
\begin{algorithmic}[1]
\Require{Precomputed schedules $\{\overline{\alpha}[t]\}_{t=1}^{T}$}
\Input{Channel matrices $\{\mathbf{H}_{\mathrm{v,train},i}\}_{i=1}^{N_\mathrm{train}}$ , corresponding UE positions $\{\mathbf{x}_{\mathrm{train},i}\}_{i=1}^{N_\mathrm{train}}$, initial model parameter $\mathbf{\Theta}$}
\Repeat
    \For{$i = 1$ to $N_{\mathrm{train}}$}
        \State $\mathbf{N}_{ab} \sim \text{i.i.d.}, \mathcal{CN}(0, 1) \text{ for } \forall a, b$ 
        \State { $\widehat{\mathbf{H}}_{\mathrm{v,train},i}\gets \widetilde{\mathbf{S}}(\mathbf{N}\,|\,\mathbf{x}_{\mathrm{train},i},\, t;\,\mathbf{\Theta})$
        }
        \State $\mathbf{\Theta} \leftarrow \mathbf{\Theta} - \eta \nabla_{\mathbf{\Theta}} \| \widehat{\mathbf{H}}_{\mathrm{v,train},i} - \mathbf{H}_{\mathrm{v,train},i}\|_F^2$
    \EndFor
\Until{converged}
\Output{Trained model $\widetilde{\mathbf{S}}(\cdot|\cdot,\cdot;\mathbf{\Theta})$}
\end{algorithmic}
\end{algorithm}

\begin{algorithm}
\caption{Sampling from one-shot U-net}
\label{alg:sample_ct}
\begin{algorithmic}[1]
\Require{pretrained model $\widetilde{\mathbf{S}}(\cdot|\cdot,\cdot;\mathbf{\Theta})$}
\Input{UE positions $\{\mathbf{x}_{\mathrm{aug},i}\}_{i=1}^{N_\mathrm{aug}}$}
\State $\mathbf{N}_{ab} \sim {\mathcal{CN}(0, 1)} \text{ for } \forall a, b$
\For{$i = 1$ to $N_{\mathrm{aug}}$}
    \State $\widetilde{\mathbf{H}}_{\mathrm{v,aug},i}\!\leftarrow\widetilde{\mathbf{S}}\bigl(\mathbf{N}\!\!\mid\!\mathbf{x}_{\mathrm{aug},i},t;\!\mathbf{\Theta}\bigr)$
\EndFor
\Output 
$\{\widetilde{\mathbf{H}}_{\mathrm{v,aug},i}\}_{i=1}^{N_\mathrm{aug}}$
\end{algorithmic}
\end{algorithm}

Fig.~\ref{figure:comparison_visualization.pdf} shows five randomly selected test samples for \textbf{Exp ID} and \textbf{Exp OOD}, respectively. From left to right, the columns display cGAN, one‑shot U‑net, cDDIM trained in the spatial domain, cDDIM (beamspace), and the ground‑truth (reference) channel.
We observe that cGAN produces channels lacking diversity, and consistently place peaks at similar coordinates in the synthetic channels, even though the reference channels have peaks at different coordinates.
In contrast, examining the fourth column (cDDIM, beamspace) shows that the BS-side peak index reliably aligns with that of the reference channel. This suggests that cDDIM can make accurate estimates given the UE coordinates, resulting in a dataset with correct predictions.
Also, training in the spatial
domain fails to reproduce the relationship between LoS peak and the
concentrated multipath structure—both of which are crucial for many
applications like beam alignment and channel compression. This result
is consistent with existing literature (e.g., \cite{CsiNet,doshi2022over})
showing that beamspace transforms help isolate significant paths and
reduce complexity. Indeed, the fact that only the beamspace version of cDDIM effectively learns the distribution underscores the inherent difficulty of modeling arbitrary distributions directly in the spatial domain.

A standout observation in Fig.~\ref{fig:exp2} is that the one‑shot U‑net fails to learn the channel structure, whereas cDDIM continues to capture it effectively. We will provide a more detailed analysis of this phenomenon in Section~\ref{Beam selection}.

\subsection{Quantitative Analysis}
\label{Beam selection}

To quantitatively evaluate the quality of the generated channels, we analyze the Line-of-Sight (LOS) peaks by examining the BS side index and comparing the differences between the peaks generated from conditional models and those from reference channels, as discussed qualitatively in the previous section. 


\begin{figure}[!tbp]
  \centering
  \includegraphics[width=0.9\linewidth, viewport=50 180 550 580, clip]{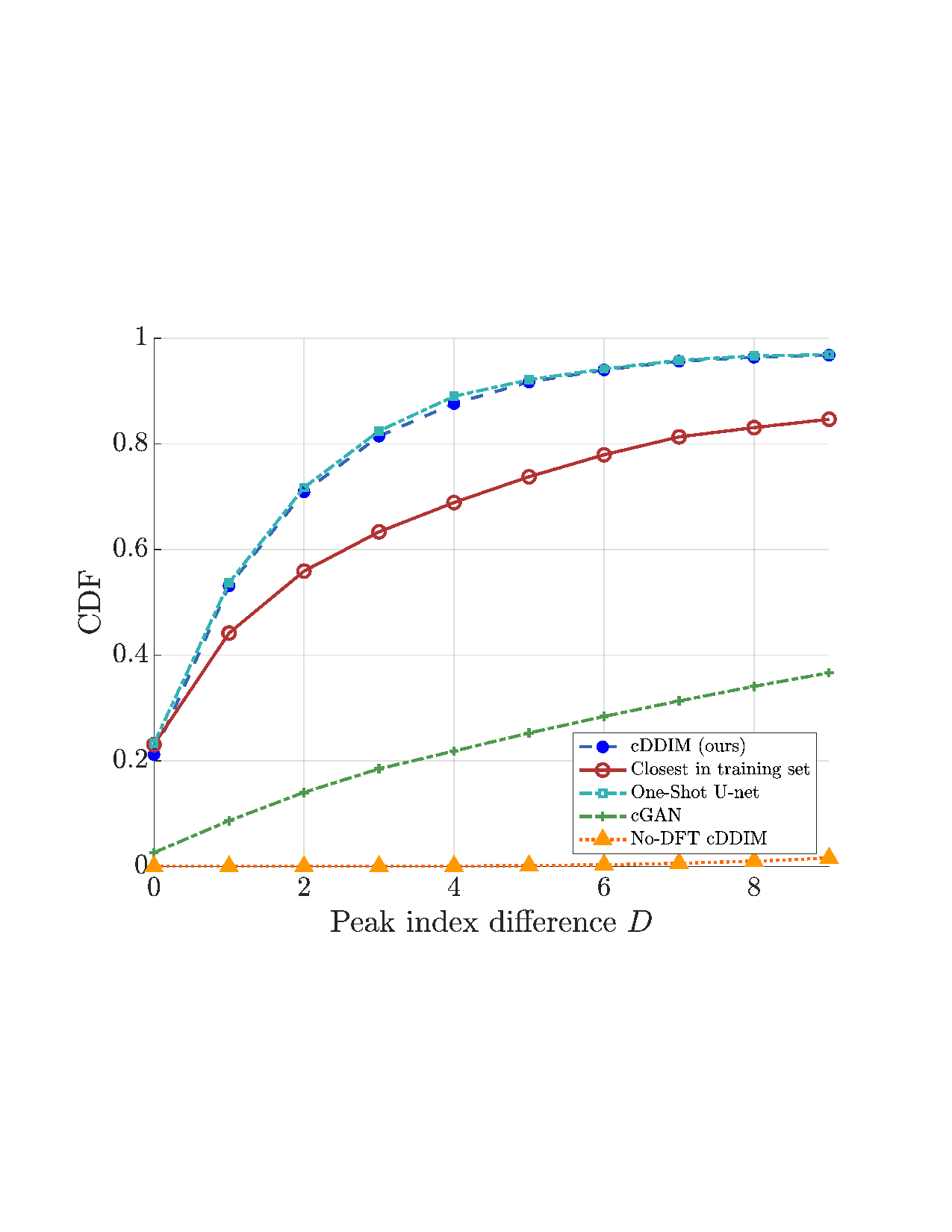}
  \caption{Peak index match probabilities versus $D$ across various channel inference techniques based on positional information for \textbf{Exp ID}. The results show that both the cDDIM method and the one-shot U-net achieve the highest match probabilities, indicating their superior performance.}\label{figure:probabilitycomparison.pdf}
\end{figure}

\begin{figure}[!tbp]
  \centering
  \includegraphics[width=0.9\linewidth, trim={70pt 190pt 60pt 200pt}]{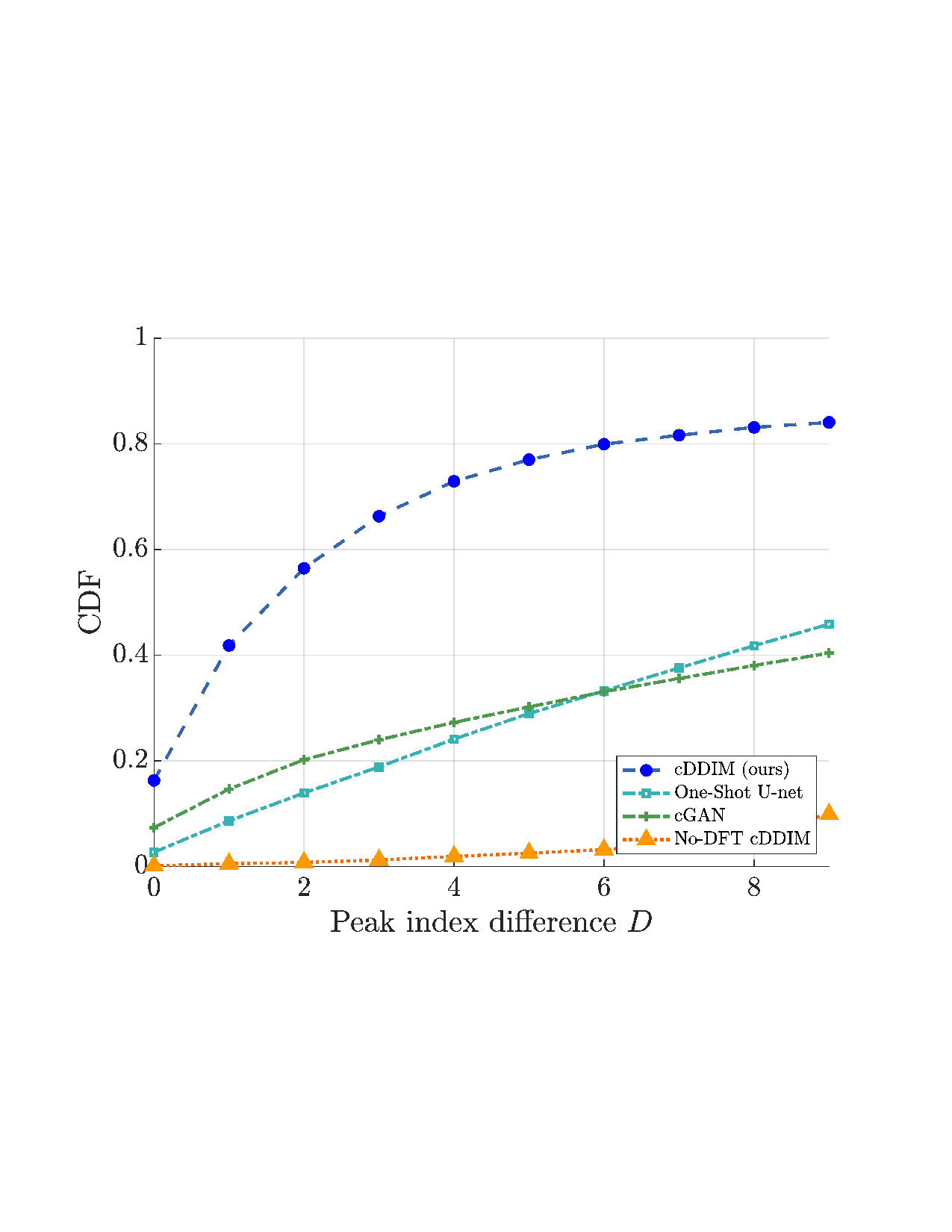}
  \caption{Peak index match probabilities versus $D$ across various channel inference techniques based on positional information for \textbf{Exp OOD}. The results show that only the cDDIM method can generalize to out-of-distribution channel generation.}\label{figure:probabilitycomparison2.pdf}
\end{figure}

\smallskip \textbf{Analysis of LOS Peaks.} 
MSE or correlation-based metrics often overlook the sparse, dominant-path structure of mmWave channels, notably the LOS path, which typically dominates system performance \cite{Alkhateeb2018Deep, Heath2016mmWaveMIMO}. Even as discussed in Section~\ref{Problem Setup}, it is inappropriate to use MSE in our setup due to small-scale fading. The direction of the LOS path is important to avoid misleading performance gains in beamforming or compression \cite{Rappaport2013mmWave5G}.
Mathematically, the peak BS side index of channel  $\mathbf{H}_\mathrm{v}$, $i_{\mathbf{H}_\mathrm{v},\text{BS}}$ is defined as
\begin{equation}
i_{\mathbf{H}_\mathrm{v},\text{BS}} = \argmax_i \max_j \mathbf{H}_{\mathrm{v},ji}.
\end{equation}

Let's examine the difference between the peak BS side index of the augmented channel dataset $\{\widetilde{\mathbf{H}}_{\mathrm{v,aug},i}\}_{i=1}^{N_\mathrm{aug}}$, denoted as $i_{\widetilde{\mathbf{H}}_{\mathrm{v,aug}},\text{BS}}$ and the reference channel dataset $\{{\mathbf{H}}_{\mathrm{v,aug},i}\}_{i=1}^{N_\mathrm{aug}}$, denoted as $i_{{\mathbf{H}}_{\mathrm{v,aug}},\text{BS}}$. We will compare these indices individually for channels in the same position. 

We are interested in the distribution of the peak index difference between the peak BS side indices of the two sets, $D = \left| \left| i_{\widetilde{\mathbf{H}}_{\mathrm{v,aug}},\text{BS}}- i_{{\mathbf{H}}_{\mathrm{v,aug}},\text{BS}} \right| \right|$. Ideally, if the augmented channel always predicts the LOS peak correctly, then  $D=0$. However, since the augmented channel may have some errors compared to the reference channel, $D$ can be non-zero. We plot the cumulative distribution function (CDF) of $D$ to evaluate how well each augmentation method predicts the location of the LOS peak.

In \textbf{Exp ID}, where the training and test sets come from the same distribution, 
both cDDIM and the One‑Shot U‑Net accurately estimate the peak beam index, 
exhibiting nearly identical performance. Specifically, their CDF at \(D=0\) 
is around 0.2—indicating an exact match 20\% of the time—and their CDF at \(D=2\) 
reaches about 0.7, so the difference from the true peak is within two indices 
70\% of the time. Remarkably, this even surpasses the accuracy of simply selecting 
the closest UE location from the training set, demonstrating that these models 
effectively interpolate channels between measured points. In contrast, cGAN and no‑DFT cDDIM perform no better than random guessing: cGAN collapses in mode diversity, while no‑DFT cDDIM fails to learn how the peak index varies with UE position.

Turning to \textbf{Exp OOD}, which employs a different (out‑of‑distribution) test region, all models degrade in accuracy. Nonetheless, cDDIM still 
maintains robust peak estimation (e.g., more than 50\% of its predictions lie within two indices of the true peak \((D \leq 2)\)), whereas one‑shot U‑net degenerates to cGAN‑level performance. The one‑shot U‑net is inherently more of a regression method that excels at interpolation within the original distribution, yet blurs the output for unseen locations. Conversely, cDDIM leverages its generative capability to capture the overall channel distribution, making it more adaptable even when encountering new, out‑of‑distribution positions. Hence, although both methods share the same U‑net backbone, using a diffusion-model framework to capture and augment the channel distribution proves crucial for strong performance. We do not include the ``closest in training set'' baseline here under distribution shifts.

\section{Application to Downstream Tasks}
\label{Downstream Applications and Evaluation}

This section presents two different downstream applications of our proposed amplified datasets. Using the dataset generated in Section \ref{Proposed Methods}, we aim to apply it to various data-driven solutions across different wireless communication tasks to determine if the amplified dataset yields better results. The advantage comes from the diffusion model's ability to produce better-interpolated datasets, and we can evaluate by the performance in downstream tasks.

Two different downstream tasks are (1) channel compression
and (2) site-specific beam alignment. The first experiment uses QuaDRiGa \cite{quadriga2023}, and the second experiment uses DeepMIMO \cite{alkhateeb:19} due to the nature of the experiments. Both experiments confirm that our cDDIM method performs effectively with statistically designed channels (as in QuaDRiGa) and with ray-tracing-based sparse channels (DeepMIMO) as well.

\smallskip \textbf{Baselines.} Several methods can be used to augment the dataset with channels. We consider (i) adding Gaussian noise and (ii) ChannelGAN \cite{Xia:22} as baseline methods.

\emph{Adding Gaussian noise}: We add 10 dB Gaussian noise to the channel matrix to amplify the dataset, similar to our method. The noise level is compared to the Frobenius norm of the channel. It is necessary to amplify the dataset significantly in size, so if the Gaussian noise level is too low, there is not much difference in the dataset amplification. Therefore, the noise level is empirically selected to make the amplification effective.

\emph{ChannelGAN}: ChannelGAN follows the structure of WGAN-GP \cite{WGAN}, consisting of two networks: a generator and a discriminator. The generator creates fake channels from a random latent vector while the discriminator determines whether the channels are real or fake. After training, the generator can make synthetic channel data to form an extensive training dataset, similar to our method. ChannelGAN work does not include positional data. Therefore, for every experiment, we amplify the dataset by ChannelGAN to 90,000 by sampling the channel from randomly sampled latent vectors.

\subsection{Channel Compression}
\label{Channel Compression}
\begin{figure}[!tbp]
  \centering
  \includegraphics[width=0.9\linewidth]{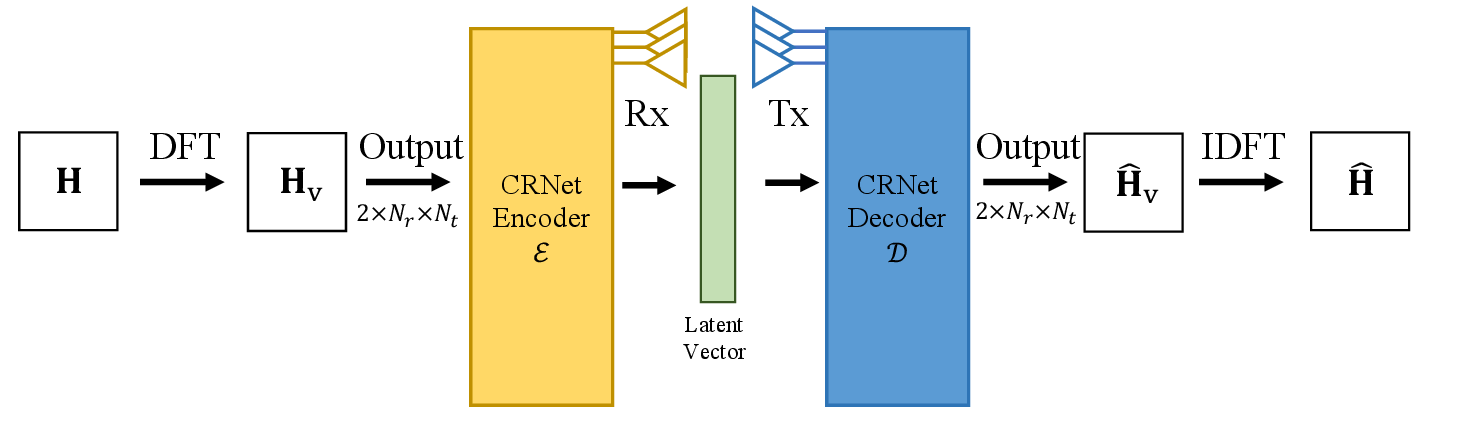}
  \caption{Diagram of CRNet aided downlink CSI feedback \cite{CRNet}}\label{figure:csicompression.pdf}
\end{figure}

\smallskip \textbf{Problem Setup.} In this work, we focus on improving the CSI feedback in MIMO systems. Specifically, we aim to evaluate the normalized mean square error (NMSE) of the reconstructed downlink (DL) CSI when using different dataset augmentation methods. The goal is to reduce the NMSE between the original and reconstructed CSI using minimal training data. For a visualization of the scenario, refer to Fig. \ref{figure:csicompression.pdf}.

Under the same training scheme, channel reconstruction
network (CRNet) \cite{CRNet} outperforms CsiNet \cite{CsiNet} and CsiNetPlus \cite{CsiNetPlus} with stable NMSE. Therefore, we chose CRNet for the downstream task. This work focuses solely on the feedback scheme, assuming ideal downlink channel estimation and uplink feedback. While the original work assumes a MISO FDD system, we assume a MIMO narrowband system, leading to different settings but a similar model structure due to the 2D sparse channel.

A CRNet consists of two deep neural networks: an encoder $\mathcal{E}$ and a decoder $\mathcal{D}$. First, we apply DFT to the channel matrix $\mathbf{H}$ to obtain its beamspace representation $\mathbf{H}_{\mathrm{v}}$. We then input the channel matrix $\mathbf{H}_{\mathrm{v}}$ into the encoder $\mathcal{E}$. Subsequently, we decode the latent vector using the decoder $\mathcal{D}$ and perform inverse DFT (IDFT) to reconstruct the channel matrix. The following formula \eqref{CRNet} explains the process, illustrated in Fig. \ref{figure:csicompression.pdf}.

\begin{equation}
\widehat{\mathbf{H}}_{\mathrm{v}}=\mathcal{D}\left(\mathcal{E}\left(\mathbf{H}_{\mathrm{v}},\Theta_{\mathcal{E}}\right), \Theta_{\mathcal{D}}\right)
\label{CRNet}
\end{equation}

\smallskip \textbf{Neural Architectures.}
The generation of $\mathbf{H}_{\mathrm{v}}$ involves using a DFT to convert the spatial domain channel matrix $\mathbf{H}$ to the beamspace representation $\mathbf{H}_{\mathrm{v}}$. The encoder $\mathcal{E}$ processes the channel matrix $\mathbf{H}_{\mathrm{v}}$, treated as an input image of size $2 \times N_{\mathrm{r}} \times N_{\mathrm{t}}$, where $N_{\mathrm{r}}$ and $N_{\mathrm{t}}$ are the antenna dimensions. The input passes through two parallel paths—one with three serial convolution layers for high resolution and the other with a single $3 \times 3$ convolution layer for lower resolution. These outputs are concatenated and merged with a $1 \times 1$ convolution, followed by a fully connected layer that scales down the features to a latent vector whose size is reduced by the compression rate.

The decoder $\mathcal{D}$ then scales up and resizes the received feature vector, processes it through a convolution layer for rough feature extraction, and further refines it using two CRBlocks. Each CRBlock contains parallel paths with different resolutions, and their outputs are merged with a $1 \times 1$ convolution layer, incorporating residual learning through identity paths. The process is completed with an additional sigmoid layer for activation, as in \cite{CRNet}.

\smallskip \textbf{Methods.} Assume that an $N_{\text{train}}$ dataset is given, and we aim to boost this dataset to $N_{\text{train}} + N_{\text{aug}} = 90,000$. We consider several methods for dataset augmentation:
\begin{itemize}
\item \textbf{Reference channels} Naively using $N_{\text{train}}$ samples.
\item \textbf{Our Method (cDDIM)} Boosting $N_{\text{train}}$ with $N_{\text{aug}}$ channels using cDDIM.
\item \textbf{Adding Gaussian Noise} Boosting $N_{\text{train}}$ with $N_{\text{aug}}$ channels by adding Gaussian noise.
\item \textbf{ChannelGAN} Boosting $N_{\text{train}}$ with $N_{\text{aug}}$ channels using ChannelGAN.
\end{itemize}

The entire $N_{\text{train}} + N_{\text{aug}}$ dataset is used for training the CRNet.

\smallskip \textbf{Experiment Setup.} In our simulation specifications, the BS is configured with 32 antennas and each UE with 4 antennas. Performance is measured by the NMSE difference between the reconstructed channel $\widehat{\mathbf{H}}_{\mathrm{v}}$ and ${\mathbf{H}}_{\mathrm{v}}$.  NMSE is appropriate as a metric in this task, as in channel estimation more broadly, since the phase of the channel is critical.

For the creation of our samples from the environment, we started with $N_{\text{ref}} = 90,000$ samples for the reference channel and position pair dataset, and $N_{\text{test}} = 10,000$ samples for the test channel dataset from the QuaDRiGa simulator. We then randomly selected $N_{\text{train}} = 0.5K, 1K, 2K, $
$4K, 6K, 8K,$ and $10K$ channel and position pair samples from the $N_{\text{ref}}$ channels, and boosted them to a total of $N_{\text{train}} + N_{\text{aug}} = 90K$ using the cDDIM augmentation method explained in Section \ref{Proposed Methods}. We trained the CRNet with these $N_{\text{train}} + N_{\text{aug}} = 90,000$ samples and evaluated it on the $N_{\text{ref}} = 10,000$ test channel samples. We also used ChannelGAN and Gaussian noise augmentations as baselines to compare with the cDDIM augmentation, as explained above. The number of epochs was set to 500, and the Adam optimizer was used for training.

\begin{figure}[!tbp]
  \centering
  \includegraphics[width=0.9\linewidth, trim={30pt 190pt 57pt 200pt}, clip]{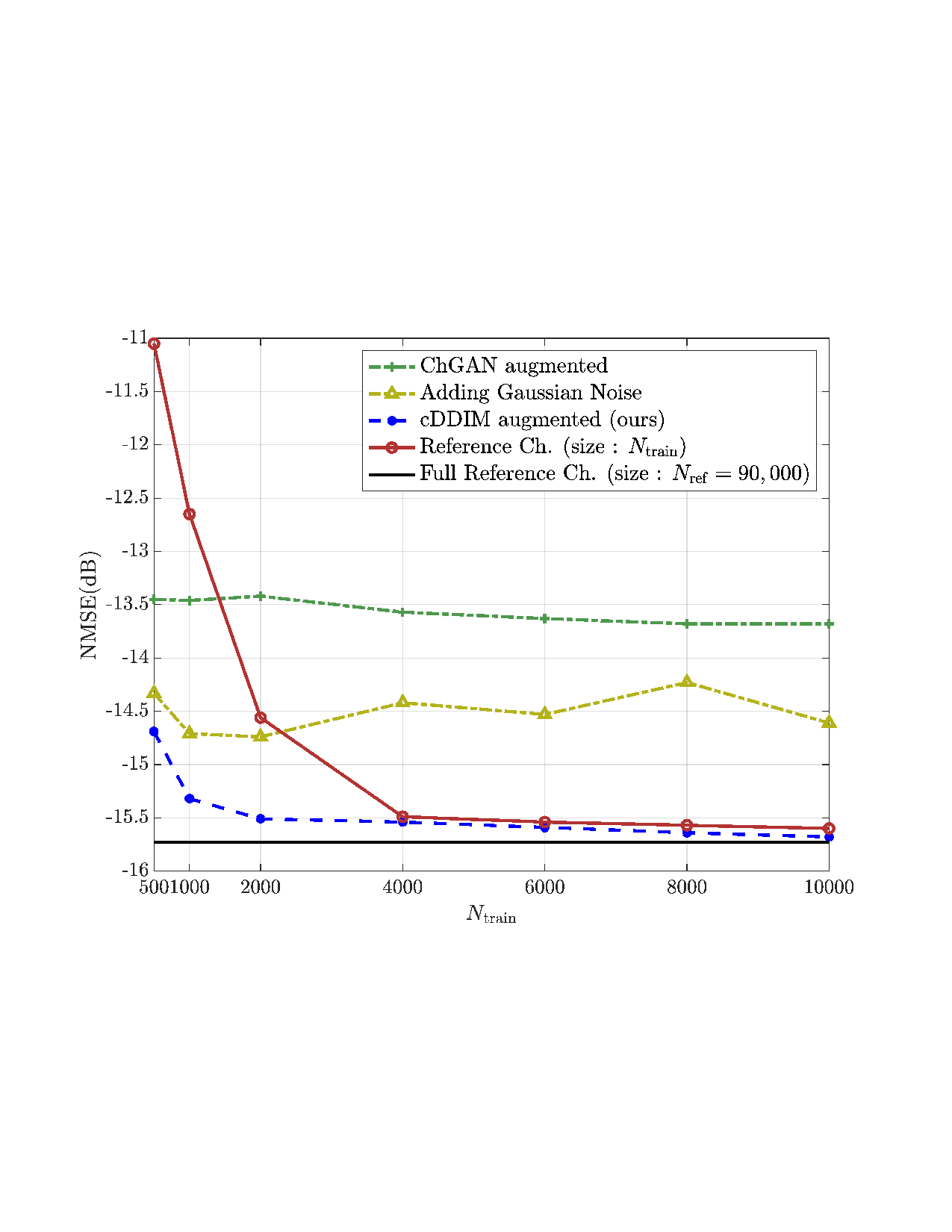}
  \caption{NMSE comparison of different augmentation methods for channel compression. cDDIM augmentation is the only method that achieves low NMSE for every $N_\text{train}$.}\label{figure:Channelcompressionresult.pdf}
\end{figure}

\smallskip \textbf{Evaluation Metric: NMSE.} Our evaluation metric is as follows,
\begin{equation}
\mathrm{NMSE}=\mathbb{E}\left\{\frac{\left\|\widehat{\mathbf{H}}_{\mathrm{v}}-{\mathbf{H}}_{\mathrm{v}}\right\|_F^2}{\left\|\mathbf{H}_{\mathrm{v}}\right\|_F^2}\right\}.
\end{equation}

When we train the model multiple times, the NMSE differs each time due to the gradient descent method yielding different local minima. We observed that the model with the smallest training loss also performed best on the test dataset. Therefore, each experiment was conducted five times, and the model with the best validation NMSE value was selected as the well-trained model.

\smallskip {\bf Results. } We summarize the results in Fig. \ref{figure:Channelcompressionresult.pdf}. The black line represents the NMSE when the CRNet is trained with $N_\mathrm{ref} = 90,000$ reference samples, serving as the lower bound for NMSE performance. 
The red line with $\texttt{o}$ markers shows NMSE when trained with varying $N_\mathrm{train}$ values. With only $N_\mathrm{train} = 500$, NMSE degrades by 5 dB compared to the black line.

However, when the dataset is augmented to 
$N_\mathrm{train} + N_\mathrm{aug} = 90,000$ using cDDIM (blue line with
\begin{tikzpicture}[baseline=-0.5ex]
    \node[circle, draw, fill=black, inner sep=0pt, minimum size=0.5ex] {};
\end{tikzpicture} markers), NMSE remains within 1 dB of the black line, despite starting with only 0.5\% of the total data. This shows that cDDIM allows performance close to the lower bound with just 0.5\% of the dataset. Although both methods achieve near-lower-bound performance with 4,000 samples, at 500 samples (1/8 of 4,000) reference dataset lags by 5 dB while cDDIM is only 1 dB away from the lower bound.
Other methods, like Gaussian noise (yellow line with
\begin{tikzpicture}[baseline=-0.5ex]
    \node[regular polygon, regular polygon sides=3, draw, minimum size=0.7em, inner sep=0pt, rotate=0] {};
\end{tikzpicture}
markers) and ChannelGAN (green line with 
$\texttt{+}$ markers) show consistently higher NMSE by 1-2 dB, regardless of the dataset size.

\smallskip {\bf Interpretation. } Adding Gaussian noise may increase robustness but does not introduce new information. We also conjecture that ChannelGAN-based augmentation introduces bias, as it randomly generates channels rather than interpolating between positions. This can result in matrices that do not adequately represent the diversity of the dataset, potentially missing the necessary interpolated channels needed to address data scarcity.

\subsection{Site-specific Beam Alignment Engine (BAE)}

\begin{figure}[!tbp]
  \centering
  \includegraphics[width=0.6\linewidth]{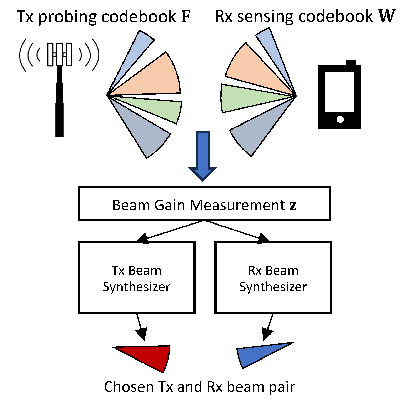}
  \caption{Illustration of the beam alignment engine based on site-specific probing \cite{heng:23}.}\label{figure:sitespecific.pdf}
\end{figure}

\smallskip \textbf{Problem Setup.} The deep learning-based grid-free beam alignment engine (BAE) introduced in \cite{heng:23} aims to learn transmit (Tx) probing beams tailored to the overall channel distribution. Initially, the BS sweeps a probing beam matrix to gather channel information,
\begin{equation}\label{eq:DL_pilot}
    \mathbf{Y} = \sqrt{P_t} \mathbf{W}^H \mathbf{H} \mathbf{F} \mathbf{s} + \mathbf{W}^H \mathbf{N},
\end{equation}
where $P_t$ is the transmit power, $\mathbf{W}$ and $\mathbf{F}$ are the receive and transmit beamforming matrices, $\mathbf{H}$ is the channel matrix, $\mathbf{s}$ is the probing symbol, and $\mathbf{N}$ is the noise matrix. 

Then, all connected UEs measure and report the received power of the probing signal. These $N_{\text{probe}}$ probing beam measurements become inputs to a multi-layer perceptron (MLP):
\begin{equation}
    \mathbf{z} = \left[ |[\text{diag}(\mathbf{Y})]_1|^2 \cdots |[\text{diag}(\mathbf{Y})]_{N_{\text{probe}}}|^2 \right]^T,
\end{equation}
that outputs the final selected beams $\mathbf{v}_r$ and $\mathbf{v}_t$ by the neural network. 

End-to-end deep learning jointly trains the final beam selector and probing beam matrix when pretraining. Refer to Fig. \ref{figure:sitespecific.pdf} for a scenario visualization.

\smallskip \textbf{Neural Architectures.} The model consists of two kinds of deep learning modules: the Complex NN Module and the Beam Synthesizer Module. The first module, the Complex NN Module, generates the Tx probing beams and Rx sensing beams. This module includes trainable parameters for the Tx complex probing weights and Rx complex sensing weights. These weights are then element-wise normalized. The first module outputs the probing beams and sensing beams, which are used to measure the received signals.


The Complex NN Module's measurements are fed into the Tx and Rx Beam Synthesizers, each consisting of dense layers with ReLU activation and batch normalization for stability. The final layer outputs the real and imaginary parts of the beamforming weights, scaled for effective beam alignment. For details, see \cite{heng:23}.

We ignore the additional initial access term $\mathcal{U}_{\text{IA}}$ explained in \cite{heng:23}  as it depends on large-scale fading, which our cDDIM model does not capture due to channel normalization. Thus, we set the initial access loss to 0 and normalize the synthetic channels to maintain a constant Frobenius norm for site-specific beam alignment.

\smallskip \textbf{Methods.} We start with a smaller initial dataset of $N_\text{train} = 100$, and augment it to $N_\text{train} + N_\text{aug} = 400,000$.  using reference channels, cDDIM, Gaussian noise, and ChannelGAN (as described in Section \ref{Channel Compression}). The augmented dataset is then used to train the BAE. We also include other baselines for comparison.

\begin{itemize}
    \item \textbf{MRC+MRT (Upper bound)}: No codebook; BS uses MRT, UE uses MRC. Theoretical upper bound via eigendecomposition is not achievable with unit-modulus constraint.
    \item \textbf{DFT+EGC}: BS has a codebook; exhaustively tries beams, UE uses EGC. Selects the best pair, assuming no noise, which is better than Genie DFT due to UE freedom.
    \item \textbf{Genie DFT}: Genie selects optimal beams in BS and UE codebooks, equivalent to an exhaustive search with zero noise.
    \item \textbf{Exhaustive Search}: Measures all beam pairs in BS and UE codebooks, selects highest received signal power, may not maximize SNR due to noise.
\end{itemize}

These are conventional methods for beam selection. The above methods are known to have much higher time complexity than site-specific beamforming \cite{heng:23}, but they serve as good benchmarks to see how our method performs in terms of beamforming gain. We focus on the average SNR performance of the above four methods to compare with the trained BAE using the boosted dataset.

\begin{table}
\small
\centering
  \caption{Site-specific beamforming simulation parameters}\label{table:site-specific}
    \begin{tabular}{| c | c |}
    \hline
    
    Simulation Environment & DeepMIMO \\ \hline
    Scenario Name & Outdoor 1 Blockage \\ \hline
    UE Pre-Augmentation $N_\text{train}$ & 100 \\ \hline
    UE Augmented Samples $N_\text{aug}$ & 239,900 \\ \hline
    UE Inference Samples $N_\text{test}$ & 80,000 \\ \hline
    BS Antenna & $4\times4$ UPA \\ \hline
    BS Codebook Size & $8\times8 = 64$ \\ \hline
    UE Antenna & $2\times2$ UPA \\ \hline
    UE Codebook Size & $4\times4 = 16$ \\ \hline
    Training Epochs & 500 \\ \hline
    Carrier Frequency & \makecell{28 GHz} \\ \hline
    Bandwidth ($B$) & 100 MHz\\ \hline
    BS Power & 35 dBm\\ \hline
    Noise Power ($\sigma^2$) & -81 dBm \\ \hline
    \end{tabular}
\end{table}

\smallskip \textbf{Experiment Setup.} In this experiment, we use the DeepMIMO dataset \cite{alkhateeb:19} to ensure site-specificity, focusing on a 28 GHz outdoor blockage scenario with two streets, an intersection, and three added surfaces as reflectors and blockers. The BS uses 16 uniform planar array (UPA) arrays, and the UE uses 4 UPA arrays.

We started with $N_\text{ref} = 240,000$ samples for the reference channel and position pair dataset, and $N_\text{test} = 80,000$ samples for the test channel dataset from the DeepMIMO simulator. Then, we randomly sampled $N_\text{train} = 100$ samples from the reference channel dataset and boosted them to $N_\text{train} + N_\text{aug} = 240,000$ using the cDDIM augmentation method. The BAE was trained on these 240,000 samples and evaluated on the 80,000 test channels, with ChannelGAN and Gaussian noise augmentations used as baselines.

\begin{figure}[!tbp]
  \centering
  \includegraphics[width=0.9\linewidth, trim={0pt 0pt 0pt 0pt}, clip]{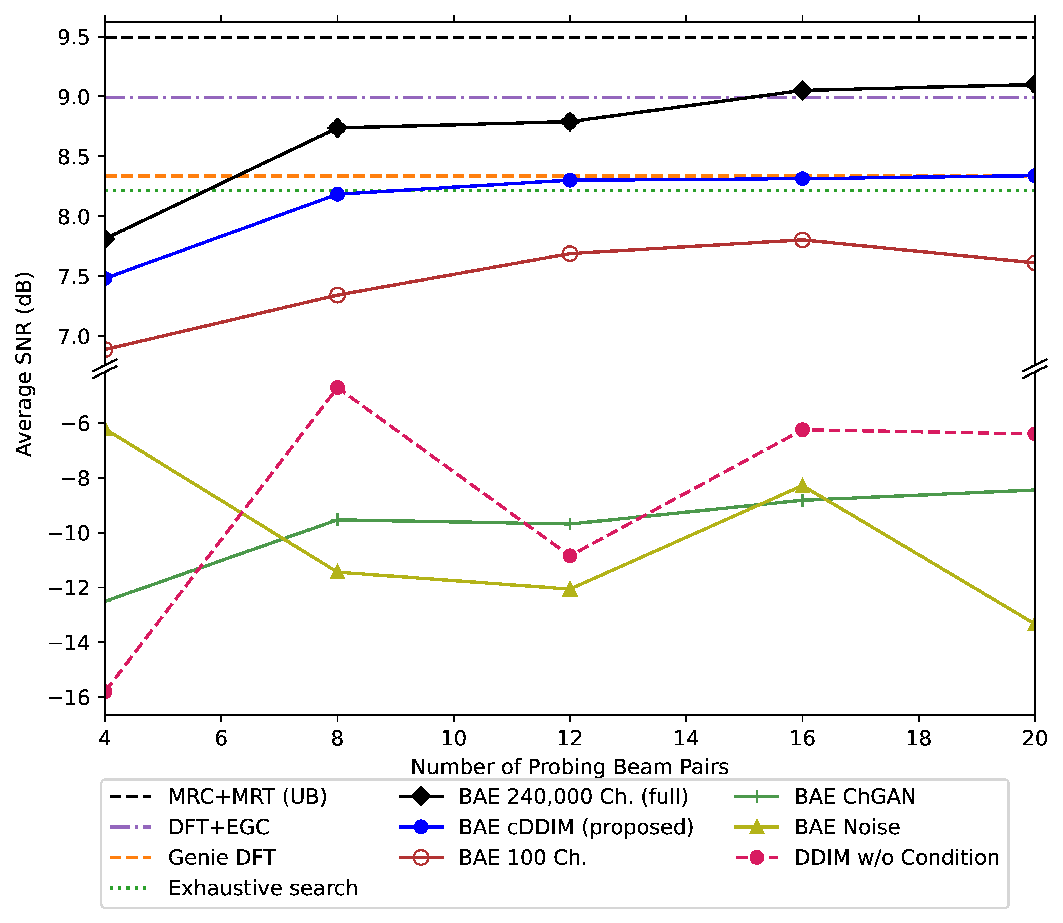}
  \caption{Average SNR of synthesized beam vs. number of beams across various beamforming and augmentation techniques. cDDIM-based augmentation is the only method that shows enhanced beamforming SNR.}\label{figure:graph_result_adjusted.pdf}
\end{figure}

\smallskip \textbf{Evaluation Metric: Average SNR.}
We evaluate BAE performance by calculating the average SNR of the synthesized beams, defined as the ratio of the selected beam's power to noise power, averaged over all test samples:
\[
\text{Average SNR} = \frac{1}{N_{\text{test}}} \sum_{i=1}^{N_{\text{test}}} \frac{P_t |\mathbf{v}_r^H \mathbf{H}_i \mathbf{v}_t|^2}{\sigma^2},
\]
where $N_{\text{test}}$ is the number of test samples, $P_t$ is the transmit power, $\mathbf{v}_r$ and $\mathbf{v}_t$ are the receive and transmit beamforming vectors, $\mathbf{H}_i$ is the channel matrix for the $i$-th test sample, and $\sigma^2$ is noise power. 

We vary the number of probing beam pairs, $N_{\text{probe}}$, which determines the number of columns in the combiner $\mathbf{W}$ and the precoder $\mathbf{F}$. More $N_{\text{probe}}$ leads to better estimates of $\mathbf{v}_r$ and $\mathbf{v}_t$.

\smallskip \textbf{Results.}
As illustrated in Fig. \ref{figure:graph_result_adjusted.pdf}, the BAE trained with the dataset augmented by the cDDIM method (blue line with \begin{tikzpicture}[baseline=-0.5ex]
    \node[circle, draw, fill=black, inner sep=0pt, minimum size=0.5ex] {};
\end{tikzpicture} markers) shows a significantly higher average SNR of the synthesized beam compared to the BAE trained with datasets augmented by ChannelGAN (green solid line with $\texttt{+}$ markers) or Gaussian noise (yellow solid line with \begin{tikzpicture}[baseline=-0.5ex]
    \node[regular polygon, regular polygon sides=3, draw, minimum size=0.7em, inner sep=0pt, rotate=0] {};
\end{tikzpicture} markers). 
The SNR gap between cDDIM and the full dataset (black line) is about 1 dB.

Using more than 16 beams in cDDIM-based augmentation consistently outperforms both exhaustive search (green dotted line) and the Genie DFT case (orange dashed line). Deep learning-based methods with grid-free beams outperform DFT beams, but Gaussian noise and ChannelGAN fail to improve average SNR consistently as beams increase, including undesirable interpolations.
ChannelGAN and adding noise both exhibit significantly worse SNR, ranging from -13 dB to -7 dB, indicating that the power of the selected beam is lower than environmental noise, making them ineffective for beamforming. The unconditioned cDDIM method (pink dashed line) yields poor average SNR on par with Gaussian noise and ChannelGAN augmentation, highlighting the necessity of conditioning on UE positions.

\section{Conclusion}
\label{Conclusion}

We proposed a novel framework for augmenting wireless channel datasets using a conditional diffusion model. We demonstrate that it is possible to significantly enhance the realism and applicability of synthetic datasets, which are crucial for training robust deep-learning models for wireless applications. 

Future work could apply contrastive learning to force the dominant LoS peak to vary smoothly with user position, thereby making the synthesized channels more realistic. Also, our experiments indicate that the model is sensitive to user-location noise; even small inaccuracies in location data—due to, for instance, user mobility—can degrade performance. This underscores the need to explore strategies for enhancing robustness to positional uncertainty, such as incorporating velocity or additional mobility-related factors, in future work.
Additionally, while we introduced random UE orientation in our experiments, the impact of varying antenna radiation patterns remains important for extending the proposed approach.
Future work could move the diffusion model into the multipath-parameter domain, enabling geometry-agnostic, interpretable channels—though path-label scarcity remains a key challenge. Likewise, it remains open whether a retrained cDDIM backbone—with spherical or hybrid waves—can also serve near-field (sub-6 m) or THz links.

\bibliographystyle{ieeetr}
\begingroup
\bibliography{AZREF}
\endgroup

\end{document}